\newtheorem{theorem}{Theorem}
\newcommand{\be}{\begin{equation}}
\newcommand{\ee}{\end{equation}}
\newcommand{\bea}{\begin{eqnarray}}
\newcommand{\eea}{\end{eqnarray}}
\newcommand{\lr}[1]{ \langle #1 \rangle}
\newcommand{\Z}{\mathbb{Z}}
\providecommand{\id}{{\boldsymbol{1}}}
\newcommand{\diag}{\mathrm{diag}}
\newcommand{\Aut}{\mathrm{Aut}}
\newcommand{\fdf}[2]{(\phi^\dagger_{#1}\phi_{#2})}
\newcommand{\fdfn}[2]{\phi^\dagger_{#1}\phi_{#2}}
\definecolor{darkred}{rgb}{0.7,0.0,0.0}
\newcommand{\gray}{\color{gray}}
\definecolor{darkgreen}{rgb}{0.0,0.5,0.0}
\def\lsim{\mathrel{\rlap{\lower4pt\hbox{\hskip1pt$\sim$}}
\raise1pt\hbox{$<$}}}         
\def\gsim{\mathrel{\rlap{\lower4pt\hbox{\hskip1pt$\sim$}}
\raise1pt\hbox{$>$}}}         
\title{Symmetries for the 4HDM. II. Extensions by rephasing groups}
\author{Jiazhen Shao$^1$\thanks{E-mail: shaojzh5@gmail.com}\, , 
Igor P. Ivanov$^1$\thanks{E-mail: ivanov@mail.sysu.edu.cn}\,, and\,
Mikko Korhonen$^2$
\\
{\small $^1$ School of Physics and Astronomy, Sun Yat-sen University, 519082 Zhuhai, Guangdong, P.R. China}
\\
{\small $^2$ Shenzhen International Center for Mathematics, Southern University of Science and Technology, } \\
{\small Shenzhen 518055, Guangdong, P.R. China}
}
\begin{document}
\maketitle

\bigskip
\begin{abstract}
We continue classification of finite groups which can be used as symmetry group 
of the scalar sector of the four-Higgs-doublet model (4HDM). 
Our objective is to systematically construct non-abelian groups via the group extension procedure,
starting from the abelian groups $A$ and their automorphism groups $\Aut(A)$.
Previously, we considered all cyclic groups $A$ available for the 4HDM scalar sector.
Here, we further develop the method and apply it to extensions 
by the remaining rephasing groups $A$, namely $A = \Z_2\times\Z_2$, $\Z_4\times \Z_2$, and $\Z_2\times \Z_2\times \Z_2$.
As $\Aut(A)$ grows, the procedure becomes more laborious, but we prove an isomorphism theorem which 
helps classify all the options.
We also comment on what remains to be done to complete the classification of all finite 
non-abelian groups realizable in the 4HDM scalar sector without accidental continuous symmetries.
\end{abstract}

\newpage
\tableofcontents
\bigskip

\hrule

\section{Introduction}

The $N$-Higgs-doublet model (NHDM) is a popular framework
for building New Physics models beyond the Standard Model (SM).
Based on the simple idea of successive ``Higgs generations'' shaped 
by global symmetries, NHDMs can address many shortcomings of the SM
and offer remarkably rich phenomenological and astroparticle signatures.

The most famous examples are the 2HDM proposed by T.D.~Lee in 1973 \cite{Lee:1973iz}
and routinely used nowadays in the LHC searches \cite{Branco:2011iw},
and the 3HDM suggested first by S.~Weinberg in 1976 \cite{Weinberg:1976hu}
and studied since then in hundreds of papers, see the recent review \cite{Ivanov:2017dad}.
The literature on models with $N > 3$ Higgs doublets is less known to the community but nonetheless impressive.
In particular, the four-Higgs-doublet model (4HDM) was first considered in the seminal 1977 paper 
by Bjorken and Weinberg \cite{Bjorken:1977vt}, and nearly hundred papers explored various 4HDM-based
models since then, see our previous paper \cite{Shao:2023oxt} for a 4HDM literature overview.

Why go beyond the 2HDM?
From the model-building perspective, a powerful novelty of NHDMs beyond two Higgs doublets
is the rich list of symmetry-based options one has at one's disposal.
By ``symmetry-based options'' we mean symmetry groups themselves
such as permutation groups $S_n$ and $A_n$ \cite{Derman:1979nf}
and $\Delta(27)$ \cite{Segre:1978ji} first discussed as early as the late 1970s, 
specific representation choices for scalars and fermions, see for example a systematic study of options 
in the $A_4$ and $S_4$ 3HDM \cite{GonzalezFelipe:2013xok}, 
the residual symmetry group at the minimum of the Higgs potential \cite{Ivanov:2014doa}
and its consequences for the fermion sector \cite{Leurer:1992wg,GonzalezFelipe:2013xok,GonzalezFelipe:2014mcf},
new forms of $CP$ symmetry of higher order \cite{Ivanov:2015mwl,Haber:2018iwr}, 
and additional opportunities for $CP$ violation \cite{Branco:1999fs}
such as geometric $CP$ violation in the scalar sector \cite{Branco:1983tn,deMedeirosVarzielas:2011zw,deMedeirosVarzielas:2012rxf,Ivanov:2013nla}. 
All these features lead to intriguing observable signatures.
An extensive discussion of the symmetry options available in the 3HDM and their phenomenological consequences
can be found, for example, in \cite{Ivanov:2017dad,Das:2014fea,Darvishi:2021txa,Das:2021oik}.
In fact, almost all 3HDM and 4HDM examples studied in literature
featured phenomenological and astroparticle signals induced by symmetries.

In short, symmetries are the treasure trove of the NHDMs.
It is therefore important to know which symmetry groups are available for a given class of models
and how these symmetries are broken upon minimization.
Within the 3HDM, many of these questions have already been answered.
We highlight in particular the classification of all finite symmetry groups realizable in the scalar sector 
of the 3HDM \cite{Ivanov:2011ae,Ivanov:2012fp,Ivanov:2012ry} and their symmetry-breaking pattern 
for each finite group \cite{Ivanov:2014doa}. 
These are complemented by insights into continuous symmetry groups, including accidental symmetries 
of the scalar potential, worked out in \cite{Darvishi:2019dbh,Darvishi:2021txa}.
All in all, the symmetry content of the 3HDMs is now well understood, and the interesting question --- 
or rather a promising research program --- is to track all phenomenological and cosmological consequences
of each symmetry option.
In contrast, the symmetry options in the 4HDM remained unexplored, with most papers limited to isolated examples.

In our recent work \cite{Shao:2023oxt}, we started a systematic classification 
of finite non-abelian symmetry groups for the 4HDM scalar sector. 
Not satisfied with the trial and error approach, we decided to apply the method which worked so successfully
for the 3HDM and began a systematic exploration of non-abelian groups obtained from abelian groups 
via the group-theoretic procedure called group extension.
In \cite{Shao:2023oxt}, we described the method and applied it to all cyclic groups available in the 4HDM scalar sector.
In the present work we extend the same method to constructing non-abelian extensions 
of the remaining rephasing groups, which are products of cyclic groups.

The structure of this paper is the following. In the next Section, we describe the general
procedure of constructing non-abelian groups using group extensions and apply it to the 4HDM.
We also prove an important Theorem which will later help us significantly reduce the number of cases to consider.
The short Section~\ref{section-Z2Z2} deals with extensions by $\Z_2\times\Z_2$, which borrows a lot from the 3HDM case.
Section~\ref{section-Z4Z2} discussed in detail extensions by $\Z_4\times\Z_2$, with its numerous non-trivial features.
In Section~\ref{section-Z2Z2Z2}, we build extensions based on $\Z_2\times\Z_2\times\Z_2$.
We present the master table of our results in Section~\ref{section-conclusions} 
and wrap up the discussion with an outlook of future work.

\section{Symmetries in the 4HDM scalar sector: general features}

\subsection{The need of constructive methods}

How do we build a model based on a specific finite group $G$ of global symmetries?
The standard approach is to assign the fields --- in our case, the Higgs doublets --- to 
a specific representation of $G$ and to construct all interaction terms of the lagrangian 
which are invariant under $G$.

This procedure, despite looking straightforward, has important pitfalls.
It may happen that, by imposing $G$, we end up with a lagrangian invariant under additional
symmetries, so that the true symmetry content of the model is larger than $G$.
A notorious situation is when we impose a finite group $G$ 
but the scalar potential possesses an accidental continuous symmetry, 
which can be spontaneously broken upon minimization leading to unwanted Goldstone bosons. 
Examples of this situation are known within the 2HDM and 3HDM.
For instance, the nearest-neighbor-interaction model constructed in \cite{Branco:2010tx}
and used in later papers was based on the $\Z_4$ symmetry imposed on the 2HDM scalar and Yukawa sector. 
However, as demonstrated in \cite{Ivanov:2013bka}, 
imposing $\Z_4$ always leads to a continuous symmetry.
Another well-known example is that imposing an single $CP$ symmetry of order higher than 4 
on the 2HDM \cite{Ferreira:2010hy,Ferreira:2010yh} and 3HDM \cite{Ivanov:2018qni} scalar potentials 
leads to a continuous Higgs family symmetry.
A systematic analysis of the accidental continuous symmetries in the 3HDM scalar sector can be found in \cite{Darvishi:2019dbh}.
Thus, when we classify finite symmetry groups available within a given class of models, 
we must assure that each $G$ represents the full symmetry content of the model and 
that the resulting potential does not contain additional accidental symmetries.
Following the notation of \cite{Ivanov:2011ae,Ivanov:2012fp}, we call such groups realizable.

Certain symmetry groups and the potentials invariant under them can be constructed in a straightforward way.
Such educated guesses are fully justified for an exploratory study. 
However, when attempting to cover all possible cases symmetry-based phenomenological situations within a given class of models,
one should ask for the full list of realizable symmetry groups available.
Such a list can only be obtained with a systematic constructive procedure.

To give an example, the 3HDMs based on global symmetry groups were first explored in late 70's and early 80's,
see \cite{Ivanov:2017dad} for a brief historical overview. Many choices for non-abelian groups $G$ were used, 
but it was unclear when to stop the search. 
The answer came only in 2012 when the papers \cite{Ivanov:2011ae,Ivanov:2012ry,Ivanov:2012fp}
presented the full classification of finite Higgs family symmetry groups and $CP$ symmetries.
In particular, these papers identified two symmetry-based 3HDMs which had not been recognized in earlier papers:
$\Sigma(36)$ 3HDM, see e.g. \cite{deMedeirosVarzielas:2021zqs} and CP4-symmetric 3HDM \cite{Ivanov:2015mwl}.

With four Higgs doublets, there is now a long record of 4HDM publications based on educated guesses for the group $G$, 
see the literature overview in \cite{Shao:2023oxt}. 
However no systematic classification of all available options exist so far.
Interesting 4HDMs with potentially peculiar phenomenology may still be hiding,
and only a systematic constructive procedure can reveal all these examples.

\subsection{Building non-abelian groups via group extension}\label{subsection-extension}

Let us now briefly review the group extension method used in \cite{Ivanov:2012ry,Ivanov:2012fp} 
to derive the complete list of discrete symmetry groups in the 3HDM scalar sector.

Any finite non-abelian group $G$ contains abelian subgroups $A$. 
By identifying possible $A$'s, one can already learn much about possible $G$.
In \cite{Ivanov:2011ae}, an algebraic technique based on the so-called Smith normal form was developed, 
which is capable of identifying the rephasing symmetry group of any lagrangian, not limited to NHDMs.
For the 3HDM, it yielded the following list of realizable finite abelian groups which are subgroups\footnote{The need and subtleties of working inside $PSU(N)$ instead of $SU(N)$ for the NHDM scalar sector are discussed in \cite{Ivanov:2011ae} and, in a way adapted to the 4HDM, in our previous paper \cite{Shao:2023oxt}.} of $PSU(3)$:
\begin{equation}
\mbox{$A$ in 3HDM:}\qquad \Z_2\,,\quad 
\Z_3\,,\quad 
\Z_4\,,\quad 
\Z_2\times \Z_2\,,\quad 
\Z_3\times \Z_3\,.
\label{3HDM-abelian}
\end{equation}
Since the orders of all these abelian groups contain only primes, 2 and 3,
the order of any non-abelian finite group is $|G| = 2^a 3^b$.
According to Burnside’s $p^aq^b$-theorem, such group $G$ is solvable; 
for a physicist-friendly introduction to solvable groups, see section~3 of Ref.~\cite{Ivanov:2012fp}. 
Together with additional group-theoretic arguments, Ref.~\cite{Ivanov:2012fp} established
that, within the 3HDM scalar sector, any finite non-abelian group $G$ must contain a normal maximal 
abelian subgroup. As a result, the factor group $G/A \subseteq \Aut(A)$, the automorphism group of $A$.
Thus, one arrives at a systematic procedure to classify all realizable groups $G$ in the 3HDM: 
\begin{itemize}
\item Take $A$ from the list Eq.~\eqref{3HDM-abelian}, compute $\Aut(A)$, list all its subgroups $K \subseteq \Aut(A)$.
\item
If $G/A \simeq K$, then $G$ can be constructed as an extension\footnote{Incidentally, in our previous paper \cite{Shao:2023oxt} we called the same construction ``extension of $A$ by $K$'' instead of ``extension of $K$ by $A$''. This description was less accurate. In fact, we take $K$ and ``blow up'' each element $k \in K$ to a coset $kA$. In this way, we extend $K$ by $A$, which is the standard formulation in mathematical texts. This is why the title of the present paper contains ``extensions by rephasing groups'', not ``extensions of rephasing groups''.} of $K$ by $A$ denoted as $A\,.\,K$.
Even for a specific $A$ and a specific $K$, the extension procedure is not unique. One needs to construct all the cases explicitly by defining how the generators of $K$ act on the generators of $A$. The most well-known type of non-abelian extensions are semidirect products $A \rtimes K$, also called split extensions. There also can exist non-split extensions;
for a pedagogical exposition and illustrative examples, see section~3 of \cite{Shao:2023oxt}.
\item
By checking all $A$'s, all $K$'s, performing all possible extensions, and verifying that 
the resulting potential does not acquire an accidental symmetry, one obtains the full list 
of finite non-abelian groups $G$ for the 3HDM scalar sector.
This list derived in \cite{Ivanov:2012ry,Ivanov:2012fp} is not long:
$S_3$, $D_4$, $A_4$, $S_4$, $\Delta(54)/\Z_3$, and $\Sigma(36)$.
\end{itemize}
Let us now discuss whether this procedure can be applied to the classification of non-abelian symmetry groups in the 4HDM.
One begins with the list of finite abelian subgroups of $PSU(4)$ realizable within the 4HDM.
We remind the reader that $PSU(4)$ is defined by taking all unitary matrices acting in $\mathbb{C}^4$
and identifying all transformations which differ only by the overall phase factor.
Group-theoretically, we factor $U(4)$ by its center, $U(1)$, and obtain as $U(4)/U(1) \simeq PSU(4)$.
Note also that it is not $SU(4)$, for $SU(4)$ contains matrices of the form $i^k\,\cdot \id_4$, with $k = 0,1,2,3$,
which form the group $\Z_4$, the center of $SU(4)$. Thus we need to factor $SU(4)$ by its center 
to arrive at $SU(4)/Z(SU(4)) \simeq PSU(4)$.

With this preliminary remark, we see that there are two sorts of abelian groups $A \in PSU(4)$, see more details in \cite{Ivanov:2011ae}. 
First, we have the groups of transformations which, in a suitable basis, can be represented by rephasing of individual doublets. 
Clearly, such groups are abelian both in $SU(4)$ and $PSU(4)$.
The full list of rephasing groups realizable in the 4HDM scalar sector was already established in \cite{Ivanov:2011ae}.
Focusing on finite rephasing symmetry groups, we get
\begin{equation}
\mbox{rephasing $A$ in 4HDM:}\qquad \Z_k\ \mbox{with}\ k = 2,\dots,8\,,\quad 
\Z_2\times \Z_2\,,\quad 
\Z_4\times \Z_2\,,\quad 
\Z_2\times \Z_2 \times \Z_2\,,
\label{4HDM-abelian-1}
\end{equation}
or, in simple words, all abelian groups of order at most 8.

Groups $A$ of the second family are abelian in $PSU(4)$ but their full preimage $\hat{A} \in SU(4)$ 
is a non-abelian group of special type: a finite nilpotent group of class 2, that is, its commutator group
$[\hat{A},\hat{A}]$ is non-trivial but lies inside its center $Z(\hat{A})$, which coincides with $Z(SU(4))$.
By passing from $SU(4)$ to $PSU(4)$, the center is factored out, and this is why $A = \hat{A}/\Z_4$ is abelian in $PSU(4)$.

Within the 3HDM, we had only one option for such a group: $\Z_3\times \Z_3$, the last group in the list of Eq.~\eqref{3HDM-abelian}.
Within the 4HDM, we have discovered three\footnote{In the previous paper \cite{Shao:2023oxt}, we identified only the first 
	of these three options. It was only during the work on the present paper, that we discovered two more cases 
	of realizable finite abelian group listed in Eq.~\eqref{4HDM-abelian-2}. 
	There are strong arguments suggesting that these three groups exhaust the additional finite abelian symmetry groups of $PSU(4)$ realizable 
	in the 4HDM. Indeed, theorem 1 of Ref.~\cite{Ivanov:2011ae} asserts that the additional groups $A$ are such that
	$\hat{A}$ is a finite nilpotent group of class 2 (i.e. the commutator $[\hat{A},\hat{A}]$ lies in the center $Z(A)$),
	that the exponent of $\hat{A}$ is divisible by $N=4$ and divides $N^2=16$. Moreover, $Z(\hat{A}) = Z(SU(4))$.
	These conditions are quite restrictive to make it plausible that no additional $A$'s with non-abelian $\hat{A}$'s exist.
	We do not have a complete proof, though, and delegate to a future work both the the task of establishing the proof and
	explicit construction of extensions by each of these additional groups. In the present paper, 
	this list of these additional groups should be considered as provisional.} 
	such options so far:
\begin{equation}
	\mbox{extra $A$ in 4HDM:}\qquad \Z_4\times \Z_4\,,\quad 
	\Z_4\times \Z_2 \times \Z_2\,,\quad (\Z_2)^4\,. \label{4HDM-abelian-2}
\end{equation}
We stress that the transformations of these groups cannot be reduced to pure rephasing factors.
For example, consider the first one, $\Z_4\times \Z_4$.
Its full pre-image in $SU(4)$ is the non-abelian group $(\Z_4 \times \Z_4)\rtimes \Z_4$ of order 64
generated by
\begin{equation}
a = \sqrt{i}\left(\begin{array}{cccc} 
	1 & 0 & 0 & 0\\ 
	0 & i & 0 & 0\\ 
	0 & 0 & -1 & 0\\ 
	0 & 0 & 0 & -i\\ 
\end{array}\right)\,,\quad
b = \sqrt{i}\left(\begin{array}{cccc} 
	0 & 1 & 0 & 0\\ 
	0 & 0 & 1 & 0\\ 
	0 & 0 & 0 & 1\\ 
	1 & 0 & 0 & 0\\ 
\end{array}\right)\,.\label{generators-Z4Z4}
\end{equation}
The factor $\sqrt{i}$ is introduced in $a$ and $b$ to make sure that $\det a = \det b = 1$.
The two generators do not commute inside $SU(4)$ but they do in $PSU(4)$, 
since their commutator $[a,b]\equiv aba^{-1}b^{-1} = -i\cdot \id_4$ belongs to the center of $SU(4)$.
Therefore, factoring this group by the center of $SU(4)$ produces the abelian group $\Z_4\times \Z_4$. 
Similar situation occurs when we are dealing with subgroups $\Z_4\times\Z_2\times\Z_2$ and $(\Z_2)^4$ in $PSU(4)$ whose pre-images in $SU(4)$ are not abelian. 

Having identified finite abelian groups for the 4HDM scalar sector, we can try to apply the 
above group extension procedure. 
Any finite non-abelian group $G$ can have abelian subgroups only from the lists in Eqs.~\eqref{4HDM-abelian-1} and \eqref{4HDM-abelian-2}. 
Unlike in the 3HDM case, we now encounter four different prime factors: 2, 3, 5, and 7. 
Thus, Burnside's $p^aq^b$-theorem is no longer applicable,
and we cannot guarantee that any group $G$ contains a normal maximal abelian subgroup.
For example, we cannot exclude the group $A_5$ of order 120, which is simple and therefore does not possess any normal subgroup.
We are forced to conclude that, by following the same arguments as for the 3HDM, 
we will not be able to list {\em all} realizable non-abelian groups of the 4HDM. 

However, we can set a less ambitious goal: classify all realizable finite non-abelian groups in the 4HDM
which are constructible via this extension procedure. It is this task that we started in the previous paper 
\cite{Shao:2023oxt} and continue in the present work.

\begin{table}[H]
\centering
\begin{tabular}[t]{cc}
	\toprule
	$A$ & $\Aut(A)$ \\
	\midrule
	$\Z_2$ & $\{e\}$ \\
	$\Z_3$ & $\Z_2$ \\
	$\Z_4$ & $\Z_2$ \\
	$\Z_5$ & $\Z_4$ \\
	$\Z_6$ & $\Z_2$ \\
	$\Z_7$ & $\Z_6$ \\
	$\Z_8$ & $\Z_2\times \Z_2$ \\
	\bottomrule
\end{tabular}
\qquad
\begin{tabular}[t]{cc}
	\toprule
	$A$ & $\Aut(A)$ \\
	\midrule
	$\Z_2 \times \Z_2$ & $S_3$ \\
	$\Z_2 \times \Z_4$ & $D_4$ \\
	$\Z_2 \times \Z_2 \times \Z_2$ & $GL(3,2)$ \\[1mm]
	\midrule\\[-3mm]
	$\Z_4 \times \Z_4$ & $GL(2,\Z_4)$ \\
	$\Z_4 \times \Z_2 \times \Z_2$ & ${\tt SmallGroup(192, 1493)}$ \\
	$(\Z_2)^4$ & $A_8$ \\
	\bottomrule
\end{tabular}
\caption{The list of finite abelian symmetry groups $A$ of the 4HDM scalar sector
	and their automorphism groups $\Aut(A)$. In this paper, we construct extensions by the first three lines of the right part of the Table.
The list in the bottom half of the right part is provisional.}
\label{table-abelian}
\end{table}
Having this task in mind, we proceed by listing in Table~\ref{table-abelian} 
all abelian groups $A$ from Eqs.~\eqref{4HDM-abelian-1} and \eqref{4HDM-abelian-2}, together with their automorphism groups $\Aut(A)$. 
The left part of this Table contains cyclic groups $A$ from $\Z_2$ to $\Z_8$. 
Their automorphism groups are abelian and the extensions are relatively easy to build in a straightforward way;
in \cite{Shao:2023oxt} we reported the results.

The right part of the Table contains products of cyclic groups. 
The first three are rephasing groups and, moreover, they are all such groups in the 4HDM scalar sector. 
Extensions by them will be the subject of the present paper.
The last three are obtained from nilpotent groups of class 2 in $SU(4)$.
As already indicated, we do not yet have the proof that we have found all such groups in the 4HDM,
so this list is provisional. In any case, the proof and 
explicit constructions of the extensions based on them is postponed to a follow-up paper.

Even focusing on the first three groups of the right half of Table~\ref{table-abelian},
we remark that construction of their extensions is more challenging than for cyclic groups.
First, the automorphism groups of products of cyclic groups are non-abelian, which naturally are more difficult to study. 
Second, these groups can easily become very large.
For example, $\Aut((\Z_2)^3)\simeq GL(3,2) \simeq SL(3,2) \simeq PSL(2,7)$ has the order $2^3\times 3\times 7 = 168$.
Third, if $K_1 \subset \Aut(A)$ and $K_2 \subset \Aut(A)$ are two non-isomorphic subgroups,
then in general the extensions $G_1 = A\,.\,K_1$ and $G_2 = A\,.\,K_2$ are also non-isomorphic
and can lead to physically distinct multi-Higgs models. Thus, we need to explore all subgroups $K$ 
of the automorphism group $\Aut(A)$ and, for each $K$, study all the extensions $A\,.\,K$. 
For large $\Aut(A)$, the subgroup structure can be complicated, leading to a huge amount of work listing and studying all $K$'s in $\Aut(A)$. For example, 
using the computer algebra system {\ttfamily GAP}~\cite{GAP}, we found that $\Aut((\Z_2)^3)\simeq GL(3,2)$ has 179 subgroups.
Checking all of them one by one and constructing all possible extensions would be impractical.

In this situation, two observations come to rescue and, eventually, make the analysis manageable.
First, even if $\Aut(A)$ is large, we can start by picking up a cyclic subgroup $K \subset \Aut(A)$ generated by a certain $k \in \Aut(A)$,
in particular, by one of its generators, and try extending it by $A$. 
It turns out that some choices immediately lead to continuous accidental symmetries
and can be safely eliminated.

Second, although $\Aut(A)$ can contain a large number of subgroups $K$, many of them lead to extensions
which result in the same 4HDM potentials, up to a basis change. 
In the next subsection, we prove a theorem which clarifies this statement and helps us further reduce 
the number of cases to consider.

\subsection{Extensions of conjugate groups are isomorphic}\label{subsection-theorem}

In this paper, our main interest is in finite groups $G$ with an abelian normal subgroup $A$ which is self-centralizing.
The latter property means that the centralizer\footnote{For the definition and the role of the centralizer, as well as the importance and consequences of having 
a normal self-centralizing abelian subgroups, we refer to section 3 of Ref.~\cite{Ivanov:2012fp},
which offers a physicist-friendly introduction to the basics of the theory of solvable groups.} of $A$ in $G$ coincides with $A$ itself: 
$C_G(A) = A$.
For such a group, the natural homomorphism $\pi: G \rightarrow \Aut(A)$ has kernel equal to $A$. 
We denote then $G = A\,.\,K$, where $K = \pi(G) \cong G/A$ is the action of $G$ on $A$.

The following result shows that it will suffice to consider the groups $K$ up to conjugacy in $\Aut(A)$.

\begin{theorem}
Let $A$ be a finite abelian group and $K_1$, $K_2$ be two subgroups of $\Aut(A)$ which are conjugate to each other, meaning there exists $q \in \Aut(A)$ such that $q^{-1}K_1 q = K_2$. 

Suppose $G_1 = A\,.\,K_1$ is a group extension such that $C_{G_1}(A) = A$, and $K_1$ is the action of $G_1$ on $A$. Then $G_1$ is isomorphic to a group of the form $G_2 = A\,.\,K_2$, where $C_{G_2}(A) = A$, and $K_2$ is the action of $G_2$ on $A$.
\end{theorem}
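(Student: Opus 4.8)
The plan is to show that $G_1$ itself already serves as a group of the required form: one keeps the same abstract group and the same normal subgroup, but relabels that subgroup by the automorphism $q$, which has the effect of conjugating the induced action inside $\Aut(A)$ from $K_1$ to $K_2$. Thus no new group needs to be constructed; the content of the theorem is really a bookkeeping statement about how the ``action'' of an extension depends on the chosen identification of the abstract group $A$ with its image in $G$.

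Concretely, write the extension $G_1 = A\,.\,K_1$ with an embedding $\iota_1\colon A \to G_1$ onto the normal subgroup $N := \iota_1(A)$ (in the usual set-up $A$ is literally a subgroup and $\iota_1$ is the identity), and let $\pi_1\colon G_1 \to \Aut(A)$ be the induced conjugation homomorphism, $\pi_1(g) = \iota_1^{-1}\circ c_g\circ \iota_1$ with $c_g$ conjugation by $g$ on $N$. By hypothesis $\pi_1(G_1) = K_1$, and $\ker\pi_1 = C_{G_1}(N) = N$ since $C_{G_1}(A) = A$, so that $G_1/A \cong K_1$. First I would replace $\iota_1$ by the new embedding $\iota_2 := \iota_1\circ q\colon A \to G_1$, which has the same image $N$. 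Declare $G_2 := G_1$, now viewed as an extension of $A$ via $\iota_2$. Then (i) $G_2 \cong G_1$ tautologically; (ii) $C_{G_2}(A) = C_{G_1}(N) = N = \iota_2(A)$, because whether a subgroup is self-centralizing is a property of the subgroup, not of its labeling; and (iii) for every $g \in G_2$ the induced automorphism is $\pi_2(g) = \iota_2^{-1}\circ c_g\circ \iota_2 = q^{-1}\circ\pi_1(g)\circ q$, whence $\pi_2(G_2) = q^{-1}K_1 q = K_2$ and $\ker\pi_2 = N = A$. Hence $G_2$ is an extension $A\,.\,K_2$ with $C_{G_2}(A) = A$ and with $K_2$ as its action on $A$, and $G_2 \cong G_1$, as claimed.

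The only place that requires care is exactly this distinction between the abstract group $A$ and its concrete realization inside $G$: the point is that pre-composing the identification $A \to N$ with $q$ conjugates the whole image $\pi(G)$ by $q$ inside $\Aut(A)$, and this is what turns $K_1$ into $K_2$. I do not expect any genuine obstacle, and in particular no cohomological input is needed. If a more ``hands-on'' proof is preferred, one can instead choose a transversal $\{g_k\}_{k\in K_1}$ of $A$ in $G_1$ with $g_k\,a\,g_k^{-1} = k(a)$ and factor set $f$ defined by $g_k g_l = f(k,l)\,g_{kl}$, and then build $G_2$ on the set $A\times K_2$ using the group isomorphism $k\mapsto q^{-1}kq$ from $K_1$ onto $K_2$ together with the automorphism $q$ of $A$ to transport both the action and $f$; verifying the $2$-cocycle identity for the transported factor set is then a routine check (and is automatic from functoriality of $H^2$). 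Since this is strictly more computation than the conceptual argument, I would relegate it to a remark.
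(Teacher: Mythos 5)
Your proof is correct, and it takes a genuinely different route from the paper's. The paper works at the level of explicit $2$-cocycles: it invokes Rotman's theorem to write $G_1 \cong G_{f,K_1}$ for some cocycle $f$, transports $f$ to a cocycle $f'$ for $K_2$ via $f'(x,y) = q^{-1}\bigl(f(qxq^{-1},qyq^{-1})\bigr)$, and exhibits the isomorphism $\tau(a,x) = (q^{-1}(a),\,q^{-1}xq)$ --- essentially the ``hands-on'' computation you relegate to a remark. Your conceptual argument is cleaner: you observe that the ``action'' of an extension is only defined relative to the chosen identification $\iota\colon A \to N$, so pre-composing with $q$ conjugates the image of $\pi$ inside $\Aut(A)$ and turns $K_1$ into $K_2$ without constructing any new group; the computation $\pi_2(g) = q^{-1}\circ\pi_1(g)\circ q$ is exactly right, and the self-centralizing property is indeed a property of the subgroup $N$ alone. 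The one point worth making explicit is that the paper's framing treats $A$ as \emph{literally} a normal subgroup with the conjugation action canonically determined, so to land precisely in that framing you should transport the structure: take $G_2$ to be $G_1$ with multiplication carried across a bijection restricting to $q^{-1}$ on $A$ (one checks $q^{-1}(a)\ast q^{-1}(a') = q^{-1}(aa')$ since $q$ is an automorphism, so $A$ keeps its group structure). This is a formality, not a gap. What the paper's approach buys in exchange for the extra machinery is an explicit matrix-level recipe for $\tau$ and for the transported cocycle, which the authors then use directly when deciding whether $\tau$ can be realized inside $PSU(4)$ --- the physically relevant question in the subsequent sections --- and which supports their pedagogical discussion of why non-split extensions require tracking the cocycle and not just the action.
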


\begin{proof}
We consider $A$ as an additive group $(A,+)$. For $K \leq \Aut(A)$, a \emph{$2$-cocycle} for $K$ is a map $f: K \times K \rightarrow A$ satisfying $f(x,1) = 0 = f(1,x)$ and $$f(xy,z) + f(x,y) = x(f(y,z)) + f(x,yz)$$ for all $x,y,z \in K$. Given a $2$-cocycle $f$, we can form the group $G_{f,K} = \{(a,x) : a \in A, x \in K \}$ with the group operation defined by $$(a,x) \cdot (a',x') = (a+x(a')+f(x,x'),xx')$$ for all $a,a' \in A$ and $x,x' \in K$. For the fact that $G_{f,K}$ is a group, see \cite[Theorem 7.30]{Rotman}. 

We can identify $A$ as a normal subgroup of $G_{f,K}$ via the isomorphism $a \mapsto (a,1)$. We have $$(a_0,x) \cdot (a,1) \cdot (a_0,x)^{-1} = (x(a),1)$$ for all $a,a_0 \in A$ and $x \in K$, therefore the action of $G_{f,K}$ on $A$ is equal to the group $K$. This also implies that $C_{G_{f,K}}(A) = A$. It follows from \cite[Theorem 7.30]{Rotman} that if $G = A\,.\,K$ is a group extension such that $C_G(A) = A$ and $K$ is the action of $G$ on $A$, then $G$ is isomorphic to $G_{f,K}$ for some $2$-cocycle $f$. 

Therefore, for the proof of the theorem we may assume that $G_1 = G_{f,K_1}$ for some $f$. Since $q^{-1}K_1 q = K_2$, we can define $f': K_2 \times K_2 \rightarrow A$ by $$f'(x,y) = q^{-1} \left(f(qxq^{-1}, qyq^{-1})\right)$$ for all $x,y \in K_2$. Using the fact that $f$ is a $2$-cocycle, a straightforward check shows that $f'$ is a $2$-cocycle for $K_2$. Now define a map $\tau: G_{f,K_1} \rightarrow G_{f',K_2}$ by $$\tau(a,x) = (q^{-1}(a), q^{-1} x q)$$ for all $a \in A$ and $x \in K_1$. It follows from the definitions that $\tau$ is an isomorphism, so by taking $G_2 = G_{f',K_2}$ we have $G_1 \cong G_2$, where $G_2$ has the required properties.\end{proof}

To make this formal proof more accessible to the physics community, we would like to point out that in the case of split extensions, also called the semidirect products $G = A\rtimes K$, the proof simplifies. In this case, the group $G$ contains 
not only the normal subgroup $A$ but also a subgroup $H$ isomorphic to $K$ which is a complement for $A$ in $G$, that is, 
$H \cap A = \{e\}$ and $G = AH$. Then one can define in a straightforward way the group operation 
not only on the set of $A$-cosets, but also on set of representative elements of these cosets. 
A simple way to obtain the split extension is to take
the trivial 2-cocycle $f(x,x') = 0$ for all $x,x' \in K$. 
In this case, the multiplication law is fully defined by the action
$x(a)$, that is, how $x \in K$ permutes the elements of $A$. The theorem then reduces to the calculation which relates the elements $g_1 \in G_1$
to the elements $g_2 \in G_2$. 

However when we build a non-split extension, the set of elements of $G$ of the form $(0,x)$ is not closed under the same group operation as $x \in K$. The 2-cocycle $f(x,x')$ is the construction that describes this failure to reproduce the group structure of $K$ inside $G$. Thus, to define the structure of the group $G$, it is not enough to specify $x(a)$, the action of elements of $K$ on $A$; we also need to define the 2-cocycle $f(x,x')$. As a result, in order to prove the isomorphism between $G_1$ and $G_2$, we need to demonstrate not only the relation between the actions $x_1(a)$, $x_1 \in K_1$, and $x_2(a)$, $x_2 \in K_2$, but also the unambiguous link between the corresponding 2-cocycles $f$ and $f'$. This is what the body of the proof does.

This theorem allows us to reduce the number of extensions we need to consider, especially when $\Aut(A)$ is large.
Namely, what we need to list is not all individual subgroups of $\Aut(A)$ but only 
all conjugacy classes of these subgroups. 
For each conjugacy class, we can select one representative subgroup and find all of its extensions of $A$.
Since all subgroups from a conjugacy class are conjugate to each other, it follows from the theorem just proved that the list of extensions of any other subgroup from the same conjugacy class will be the same. 

There is, however, an important caveat which does not render the classification problem as easy as it may sound.
When we apply the above result to construction of the symmetry-based 4HDMs, we deal not with abstract groups
but with their four-dimensional representations, that is, subgroups of $PSU(4)$.
Even if $A$, $K_1$, and $K_2$ can be faithfully represented as groups of transformations from $PSU(4)$,
the transformation $q$ linking $K_1$ and $K_2$ and, consequently, the transformation $\tau$ linking $G_1$ and $G_2$
are not guaranteed to belong to $PSU(4)$. 
In our previous paper \cite{Shao:2023oxt}, we encountered examples of group transformations 
which do not fit the desired representation.
For example, the group $A = \Z_8 \in PSU(4)$ has the automorphism group $\Aut(\Z_8) = \Z_2\times \Z_2$.
We can take any non-trivial element $q$ from $\Aut(\Z_8)$ and define its action on the group $\Z_8$
in abstract group-theoretic terms. However, once we write $\Z_8$ as a subgroup of $PSU(4)$ and try to construct
$q \in PSU(4)$ satisfying the desired relations, we end up with a system of equations which does not have solutions. 
Thus, such $q$ does not fit $PSU(4)$.

If it happens that $\tau$, which maps $G_1 \to G_2$ by conjugation, can indeed be represented by a $PSU(4)$ transformation, 
then the situation simplifies.
Indeed, the invariance of $V_1(\phi)$ under $G_1$ means
that $V_1(g_1(\phi)) = V_1(\phi)$ for any $g_1 \in G_1$. 
The transformation $\tau$ now acts in the same space and defines a basis change: $\phi \mapsto \tau(\phi)$.
This is not a symmetry of the potential: $V_1(\tau(\phi)) \equiv V_2(\phi) \not = V_1(\phi)$.
However the potential $V_2(\phi)$ defined in this way is invariant under the group $G_2$.
Indeed, picking up $g_2 \in G_2$ and representing it as $\tau^{-1}g_1 \tau$ for some $g_1 \in G_1$, we obtain
\begin{equation}
	V_2\left(g_2(\phi)\right) = V_2\left(\tau^{-1}(g_1(\tau(\phi)))\right) = V_1\left(g_1(\tau(\phi))\right) =
	V_1\left(\tau(\phi)\right) = V_2(\phi)\,.
\end{equation}
Therefore, the potential $V_2(\phi)$ invariant under $G_2$ has the same symmetry content 
(that is, their symmetry groups are related by a basis change) as $V_1(\phi)$ invariant under $G_1$.
Since the two potentials, $V_1(\phi)$ and $V_2(\phi) =  V_1(\tau(\phi))$, are related by a mere basis change,
their physics consequences are identical. 
On the contrary, if it happens that $\tau$ linking $G_1$ and $G_2$ cannot be represented by an $PSU(4)$ transformation,
than the potentials $V_1$ and $V_2$ have different physical consequences, even if their symmetry groups
carry the same labels.

With this helpful result, we update our procedure for building 4HDM models based on extensions of the type $A\,.\,K$,
where $K \subseteq \Aut(A)$. We need to consider not the individual subgroups $K$ nor the entire conjugacy classes 
of $K$ inside $\Aut(A)$, but the conjugacy classes in which we only use transformations $\tau \in \Aut(A)$ 
expressible as $PSU(4)$ transformations. It is then sufficient to consider only one representative $K$ 
from each such conjugacy class and build all the group extensions available.

\subsection{The strategy of the work}

In this paper, we will apply the group extension procedure outlined in Section~\ref{subsection-extension} 
to the first three abelian groups in the right part of Table~\ref{table-abelian}.
The sequence of steps will be similar to our previous work \cite{Shao:2023oxt}.
\begin{itemize}
\item 
Pick up the abelian group $A \subset PSU(4)$ and write down its generators. 
It is convenient to represent them as $SU(4)$ rather than $PSU(4)$ transformations, 
but we need to keep in mind that all relations are defined modulo to
the center of $SU(4)$, which is the group $\Z_4$ generated by $i\cdot \id_4$.
\item 
It may happen that there are more than one nonequivalent ways a given abelian group $A$ can be implemented in the 4HDM scalar sector
without leading to accidental continuous symmetries. For example, in our previous work \cite{Shao:2023oxt} we found
that $\Z_4$ can be implemented in three distinct ways, which cannot be linked by any basis change. 
Therefore, these are three distinct $\Z_4$-invariant 4HDM models.
Each implementation leads to the Higgs potential with different number of free parameters and, eventually, different
options for the non-abelian extensions.
In order to be sure that we do not miss any implementation, we rely on our own Python code {\tt 4HDM Toolbox} \cite{TheCode},
which was described in \cite{Shao:2023oxt} and is freely available at GitHub.
\item
List all the conjugacy classes of subgroups of $\Aut(A)$ using only such transformations which can be expressed as $PSU(4)$ transformations. 
For each class, take a representative subgroup with its generators,
which we generically write as $b$.
By defining how each generator $b$ acts on $A$ and by choosing whether the properties of $b$'s, now seen as the elements of the extension group,
reproduce the properties of $b$'s inside the parent group $\Aut(A)$, construct all possible non-abelian extensions.
\item 
Using a specific implementation of the group $A$, which is defined by the expression of its generators $a_i$,
write the action of $b$ on $a_i$ in the form of matrix equations. Solve these equations.
If a unitary solution for $b$ exists, we have constructed the desired extension.
\item 
Now turn to the Higgs potential	invariant under $A$ and require that, in addition, it be invariant
under the generator $b$ just constructed. Check whether accidental symmetries appear, more on it below.
If they do not, we find a viable 4HDM with the desired non-abelian symmetry group.
\end{itemize}

Note that, when building the scalar potential invariant under the chosen abelian group $A$,
we implicitly assume that the rephasing-insensitive part $V_0$ is always present, where
\begin{eqnarray}
V_0 &=& \sum_{i=1}^4 \left[m_{ii}^2\fdf{i}{i} + \Lambda_{ii}\fdf{i}{i}^2 \right]+ 
\sum_{i < j} \left[\Lambda_{ij}\fdf{i}{i}\fdf{j}{j} + \tilde \Lambda_{ij}\fdf{i}{j}\fdf{j}{i}\right]\,.\label{V0-general}
\end{eqnarray}
When we construct non-abelian extensions, which involve not only rephasing but also permutations,
we will require that the coefficients of $V_0$ are such that $V_0$ is invariant under those permutation as well.
The exact set of relations depends on the specific set of permutations used; the full list of options
can be found in our previous paper \cite{Shao:2023oxt}.

\subsection{Detecting accidental symmetries}

	In the last step of the strategy, we mentioned the necessity of checking whether an accidental symmetry appears
as a result on requiring that the potential be invariant under the symmetry $b$.
This is a subtle issue which requires explanation.
First, when we begin with a model with a rephasing abelian symmetry group $A$, we write the most general potential
invariant under it, in any of the possible implementations of $A$.
The absence of an accidental symmetry is guaranteed by the Smith normal form methods itself described in \cite{Ivanov:2011ae}.
However, by adding $b$ which mixes doublets and requiring the potential to be invariant under it, 
we may inadvertently render the potential invariant under other symmetries, either a rephasing symmetry or 
an additional doublet-mixing one.

Detecting an accidental rephasing symmetry is straightforward. 
It follows from the general analysis of rephasing transformations 
\cite{Ivanov:2011ae} that, if an NHDM scalar potential contains less than $N-1$ rephasing-sensitive terms,
it acquires a continuous rephasing symmetry. Thus, if we want to avoid accidental rephasing continuous symmetries 
in the 4HDM scalar sector, the potential must have at least three rephasing-sensitive terms.
Thus, if imposition of $b$ forces are to set some of the coefficients to zero and reduces 
the number of rephasing sensitive terms to two or less,
the potential will automatically acquire an accidental continuous rephasing symmetry. 
We will encounter such examples at several occasions below.

It is a harder task to detect an accidental symmetry which is neither a pure rephasing 
nor a simple permutation but mixes doublets in a non-trivial way.
A strong hint at the absence of such an accidental symmetry comes from the structure of coefficients 
of the potential. For example, if two quadratic coefficients in Eq.~\eqref{V0-general} are distinct,
such as $m_{11}^2 \not = m_{22}^2$, no symmetry exists mixing the two doublets.
If $m_{11}^2 = m_{22}^2$ and, furthermore, if $\Lambda_{11} = \Lambda_{22}$,
we still can rule out any symmetry mixing $\phi_1$ and $\phi_2$ if the quartic 
cross term coefficient $\Lambda_{12} \not = 2 \Lambda_{11}$.
It is on the basis of such checks that we claim, in each case, the absence of accidental symmetries.

Although such checks provide strong evidence, they do not represent a rigorous proof.
Accidental symmetries can be unambiguously ruled out with basis-independent methods, 
as it was done for the 2HDM and 3HDM.
Such methods have not yet been developed for the 4HDM, although the first steps were done, 
for example, in \cite{Plantey:2024gju}.
Once they appear, they can be applied to prove the absence of accidental symmetries in each case.


\section{Extensions based on $\Z_2\times \Z_2$}\label{section-Z2Z2}

The symmetry group $\Z_2\times \Z_2 = \{e, a_1, a_2, a_1a_2 \} $ is easy to implement in a multi-Higgs model 
as its transformations can be defined as sign flips of certain doublets. 
Within the 3HDM, this construction is unique: the two generators of $\Z_2\times \Z_2$ can be always brought to the form
\begin{equation}
\mbox{3HDM:}\quad a_1 = 
\begin{pmatrix}
	-1 & 0 & 0 \\
	0 & -1 & 0 \\
	0 & 0 & 1 \\
\end{pmatrix}\,,\quad
a_2 = 
\begin{pmatrix}
	1 & 0 & 0 \\
	0 & -1 & 0 \\
	0 & 0 & -1 \\
\end{pmatrix}\,.\quad
\label{Z2Z2-1}
\end{equation}
Since the overall sign flip of all doublets has no effect on the model,
one can also view $a_1$ as the sign flip of $\phi_3$ alone and $a_2$ as the sign flip of $\phi_1$ alone.
This symmetry group was used in the famous Weinberg's model \cite{Weinberg:1976hu} which triggered 
an intense exploration of models with more than two scalar doublets.

With four Higgs doublets, we encounter two non-equivalent implementations of $\Z_2 \times \Z_2$,
which differ by the presence or absence of a 2D invariant subspace. 
We label these two implementations as
\begin{eqnarray}
\mbox{fully represented $\Z_2\times\Z_2$:} && a_1 = \diag(-1,-1, 1, 1)\,,\quad a_2 = \diag(1,-1, -1, 1)\,,\label{extension-Z2Z2-1}\\
\mbox{$\Z_2\times\Z_2$ with a 2D inv. subspace:} && 
a_1 = \sqrt{i}\cdot\diag(-1,1, 1, 1)\,,\ 
a_2 = \sqrt{i}\cdot\diag(1,-1, 1, 1)\,.\label{extension-Z2Z2-2}
\end{eqnarray}
We call the first option as the ``fully represented $\Z_2\times\Z_2$'' because the four doublets
transform as the four distinct singlets of $\Z_2\times\Z_2$:
\begin{equation}
\phi_1 \sim 1_{-+}\,, \quad
\phi_2 \sim 1_{--}\,, \quad
\phi_3 \sim 1_{+-}\,, \quad
\phi_4 \sim 1_{++}\,.
\end{equation}
Using the code {\tt 4HDM Toolbox} \cite{TheCode} to perform exhaustive search as discussed in \cite{Shao:2023oxt}, 
we verified that any $\Z_2 \times \Z_2$ symmetry group 
in the 4HDM scalar sector can indeed be represented by one of the two above options.
Notice that if one removes the fourth doublet, then the two implementations of $\Z_2 \times \Z_2$ 
lead to the same group of the 3HDM, up to the overall phase change. 
Thus, it is the presence of the fourth doublet which distinguishes the two options, despite the fact  
that $\varphi_4$ transforms trivially under $\Z_2 \times \Z_2$. 

Turning now to the automorphism group $\Aut(\Z_2\times\Z_2) \simeq S_3$, where $S_3$ acts on the three non-trivial elements
of $\Z_2\times\Z_2$ by permutations, 
we notice that all three options for the group extension
\begin{equation}
(\Z_2\times\Z_2)\rtimes \Z_2 \simeq D_4\,, \quad
(\Z_2\times\Z_2)\rtimes \Z_3 \simeq A_4\,, \quad
(\Z_2\times\Z_2)\rtimes S_3 \simeq S_4
\end{equation}
were already available for the 3HDM model building \cite{Ivanov:2012fp}.
The extension by $\Z_2$ can be defined by an order-2 transformation $b$, 
which sends $a_1 \mapsto a_2$ and $a_2 \mapsto a_1$.
The extension by $\Z_3$ involves an order-3 transformation $c$, which generates
cyclic permutations of the three elements such as $a_1\mapsto a_2\mapsto a_1a_2 \mapsto a_1$.
The extension by $S_3$ involves both $b$ and $c$.

All three extensions can be readily exported to the 4HDM with the fully represented implementation of $\Z_2\times\Z_2$.
However the other implementation \eqref{extension-Z2Z2-2} only admits the extension $(\Z_2\times\Z_2)\rtimes \Z_2 \simeq D_4$.
Trying to extend it by $\Z_3$ would lead to the equation $c^{-1}a_1 c = a_2$ and $c^{-1}a_2 c = a_1a_2$.
While the former equation can be solved, the latter one has no non-trivial solution because $a_2$ contains the prefactor $\sqrt{i}$
while $a_1a_2$ does not.
Thus, although this peculiar realization of the $\Z_2\times\Z_2$ symmetry leaves a 2D subspace invariant,
its non-abelian extensions can only be $D_4$, although this $D_4$ is different from the group $D_4$ we obtained in the fully represented case. 
Even if one considers non-split extensions by $\Z_2$,
one still arrives only at $D_4$ and not at the quaternion group $Q_4$ because $\Z_2\times\Z_2 \not \subset Q_4$. 

The scalar potential invariant under each implementation of the $\Z_2\times\Z_2$ and its extensions
are rather lengthy but can be readily written out, as they borrow their structures from the 3HDM case.

\section{Extensions based on $\Z_4\times\Z_2$}\label{section-Z4Z2}

\subsection{The three implementations of $\Z_4\times\Z_2$ in the 4HDM}

The abelian group $\Z_4\times\Z_2$ was not available in the 3HDM but appears in the 4HDM, and the analysis of its extensions is much more involved.
Let us begin by describing the group itself, its automorphism group $\Aut(\Z_2\times\Z_4) \simeq D_4$, 
and the conjugacy classes of the subgroups of $D_4$.

The group $\Z_4\times\Z_2$ is generated by $a_1$ of order 4 and $a_2$ of order 2, which commute with each other.
This group contains, among its subgroups, both $\Z_4$ and $\Z_2\times\Z_2$.
We already know from \cite{Shao:2023oxt} that there are three distinct 4HDM implementations of $\Z_4$.
We also established in the previous section that there exist two nonequivalent implementations of $\Z_2\times\Z_2$.
Thus, it is natural to explore implementations of $\Z_4\times\Z_2$ by combining these choices.

It turns out, however, that not all combinations lead to viable models.
For example, let us take the fully represented $\Z_4$ and the fully represented $\Z_2\times\Z_2$.
This can be done by choosing the following generators of the $\Z_4\times\Z_2$:
\begin{equation}
a_1 = \sqrt{i}\cdot\diag(i, -1, -i, 1)\,, \quad a_2 = \diag(1, -1, -1, 1)\,.\label{extension-Z4Z2-invalid}
\end{equation}
One immediately sees that $a_1^2$ and $a_2$ are the pair which gives the fully generated $\Z_2\times\Z_2$.
Next, we take the corresponding $\Z_4$-invariant potential from \cite{Shao:2023oxt} 
and leave only those terms which remain invariant under $a_2$:
\begin{equation}
\begin{aligned}
	\tilde V & = \lambda_3\fdf{1}{3}^2 + \lambda_6\fdf{2}{4}^2 + \lambda_7\fdf{1}{2}\fdf{4}{3} + \lambda_8\fdf{1}{3}\fdf{4}{2} \\
	& + \lambda_9\fdf{1}{3}\fdf{2}{4} + \lambda_{10}\fdf{1}{4}\fdf{2}{3} + h.c.
\end{aligned}
\label{V1-Z4Z2-invalid}
\end{equation}
We remind the reader that, in addition to these terms, the full potential also includes 
the rephasing-insensitive part $V_0$ given in \eqref{V0-general}.
However, the resulting potential $V_0 + \tilde V$ possesses an accidental $U(1)$ symmetry:
\begin{equation}
U(1)_{\rm acc.} = \diag\left(e^{i\alpha}, e^{-i\alpha}, e^{i\alpha}, e^{-i\alpha}\right)\,.\label{U1-accidental}
\end{equation}
Thus, this situation cannot be classified as a $\Z_4\times\Z_2$ symmetric model.

After studying all the combinations of the $\Z_4$ and $\Z_2\times\Z_2$ generators and verifying the results with the code
{\tt 4HDM Toolbox} \cite{TheCode}, we found that only three
nonequivalent options for the $\Z_4\times\Z_2$-symmetric 4HDM exist. We list below there potential and generators:
\begin{eqnarray}
\mbox{option 1:}&& V_1 = \lambda_1\fdf{1}{3}^2 + \lambda_2\fdf{2}{4}^2 + \lambda_3\fdf{1}{2}\fdf{3}{2} + \lambda_4\fdf{1}{4}\fdf{3}{4} + h.c. \nonumber\\
&& a_1^{(1)} = \sqrt{i}\cdot\diag(i,-1,-i,1)\,,\quad a_2^{(1)} = \sqrt{i}\cdot\diag(1,-1,1,1)\,, \label{extension-Z4Z2-a1}\\[2mm]
\mbox{option 2:}&& V_2 = \lambda_1\fdf{1}{2}^2 + \lambda_2\fdf{3}{4}^2 + \lambda_3\fdf{1}{3}\fdf{2}{4} + \lambda_4\fdf{1}{4}\fdf{2}{3}  + h.c.\nonumber\\
&& a_1^{(2)} = \diag(i,i,-1,1)\,,\quad a_2^{(2)} = \diag(-1,1,-1,1)\,,  \label{extension-Z4Z2-a2}\\[2mm]
\mbox{option 3:}&& V_3 = \lambda_1\fdf{1}{2}^2 + \lambda_2\fdf{1}{3}^2 + \lambda_3\fdf{2}{3}^2 + \lambda_4\fdf{1}{4}\fdf{3}{4}  + h.c.\nonumber\\
&& a_1^{(3)} = i^{3/4}\cdot\diag(i,i,-i,1)\,,\quad a_2^{(3)} = \diag(-1,1,-1,1)\,.  \label{extension-Z4Z2-a3}
\end{eqnarray}
Below we will build non-abelian extensions for each of these three options.

Let us stress once more that if an NHDM scalar potential contains less than $N-1$ rephasing-sensitive terms,
it acquires a continuous rephasing symmetry. 
Thus, if we want to avoid accidental rephasing symmetries in the 4HDM scalar sector,
the potential must have at least three rephasing-sensitive terms.
Each of the three $\Z_4\times\Z_2$-symmetric options shown above contains four terms,
and no continuous symmetry is present.
However if a specific extension requires that any two of these coefficients vanish,
the potential will automatically acquire an accidental continuous rephasing symmetry. 

\subsection{The automorphism group of $\Z_4\times\Z_2$ and its conjugacy classes}

The automorphism group of $\Z_4\times\Z_2$ is 
\begin{equation}
\Aut(\Z_4\times\Z_2)\simeq D_4 = \langle b, c\,|\, b^4 = c^2 = e,\, c b c = b^{-1}\rangle\,.
\end{equation}
The two automorphisms $b$ and $c$, which generate $\Aut(\Z_4\times\Z_2)$, act on the group $\Z_4\times\Z_2$ in the following way:
\begin{eqnarray}
b: \ \left\{
\begin{aligned}
	&a_1 \mapsto a_1 a_2\\
	&a_2 \mapsto a_2a_1^2
\end{aligned}
\right.,
\qquad 
\tilde b \equiv b^2: \ \left\{
\begin{aligned}
	&a_1 \mapsto a_1^{-1}\\
	&a_2 \mapsto a_2
\end{aligned}
\right.,
\qquad 
c: \ \left\{
\begin{aligned}
	&a_1 \mapsto a_1\\
	&a_2 \mapsto a_2a_1^2
\end{aligned}
\right.
\label{extension-Z4Z2-bc}
\end{eqnarray}
Next, we need to list all subgroups of $D_4$, and also arrange them into conjugacy classes. 
We remind the reader that $D_4$ (the symmetry group of the square) contains only one 
subgroup $\Z_4$ (the rotations of the square) but five subgroups $\Z_2$ (reflections of the square).
These five $\Z_2$'s form three conjugacy classes:
two reflections parallel to the sides (generated by $c$ or $b^2 c$), 
two reflections along the diagonals (generated by $bc$ or $b^3 c$), 
and the unique point reflection $\tilde b$, which commutes with 
any symmetry of the square and, group-theoretically, corresponds to the center of $D_4$.
It is straightforward to check that there are also two $\Z_2 \times \Z_2$ subgroups, each forming its own conjugacy class.
Thus, we obtain eight proper non-trivial subgroups of $D_4$, which are explicitly listed in Table~\ref{table-subgroups-D4}.

\begin{table}[H]
\centering
\begin{tabular}[t]{cc|ccc}
	\toprule
	conjugacy classes & groups & option 1 & option 2 & option 3 \\
	\midrule
	$\Z_2$ & $\{e,\,b^2 \}$ & $\checkmark$ & $\checkmark$ & {\gray $\times$} \\[1mm]
	$\Z_2$ & $\{e,\,b^2c\}$ & $\checkmark$ & $\checkmark$ & {\gray $\times$} \\
	 	   & $\{e,\,c\}$ & $\checkmark$ & $\checkmark$ & {\gray $\times$} \\[1mm]
	$\Z_2$ & $\{e,\,bc\}$ & {\gray $\times$} & $\checkmark$ & $\checkmark$ \\
	       & $\{e,\,b^3c\}$ & {\gray $\times$} & $\checkmark$ & {\gray $\times$} \\[1mm]
	$\Z_4$ & $\{e,\,b,\,b^2,\,b^3\}$ & {\gray $\times$} & $\checkmark$ & {\gray $\times$} \\[1mm]
	$\Z_2 \times \Z_2$ & $\{e,\,b^2,\,c,\,b^2c \}$ & $\checkmark$ & $\checkmark$ & {\gray $\times$} \\[1mm]
	$\Z_2 \times \Z_2$ & $\{e,\,b^2,\,bc,\,b^3c \}$ & {\gray $\times$} & $\checkmark$ & {\gray $\times$} \\
	\bottomrule
\end{tabular}
\caption{The non-trivial proper subgroups of $D_4$ arranged by the conjugacy classes.
For each subgroup, the extensions are possible (marked with $\checkmark$)
or impossible ({\gray $\times$}) to construct, depending on which option we choose out of the three possible options 
for the $\Z_4\times \Z_2$ group, given in Eqs.~\eqref{extension-Z4Z2-a1}--\eqref{extension-Z4Z2-a3}.
}
\label{table-subgroups-D4}
\end{table}

As we will find below, not all subgroups can be used to build extensions. 
Moreover, the ability of a subgroup of $D_4$ to produce a non-abelian extension by $\Z_4\times \Z_2$ depends 
on the particular implementation of $\Z_4\times \Z_2$ in the 4HDM, given by Eqs.~\eqref{extension-Z4Z2-a1}--\eqref{extension-Z4Z2-a3}.
In what follows, we will first select this implementation (options 1, 2, or 3) and then go through 
the list of subgroups of $D_4$, trying each time to build an extension.
To help the reader navigate through the rather laborious procedure, we summarize in the same Table~\ref{table-subgroups-D4}
which subgroups for which option yield a non-trivial extension.

\subsection{Building $\Z_4\times\Z_2$ extensions: option 1}

\subsubsection{Three choices of $\Z_2$ extensions with no other extensions available}

We begin constructing extensions for the first implementation of the group $\Z_4\times \Z_2$, which is given in Eqs.~\eqref{extension-Z4Z2-a1}.
For the reader's convenience, we repeat it here:
\begin{eqnarray}
	\mbox{option 1:}&& V_1 = \lambda_1\fdf{1}{3}^2 + \lambda_2\fdf{2}{4}^2 + \lambda_3\fdf{1}{2}\fdf{3}{2} + \lambda_4\fdf{1}{4}\fdf{3}{4} + h.c. \nonumber\\
	&& a_1 = \sqrt{i}\cdot\diag(i,-1,-i,1)\,,\quad a_2 = \sqrt{i}\cdot\diag(1,-1,1,1)\,, \label{extension-Z4Z2-a1-again}
\end{eqnarray}
The prefactors $\sqrt{i}$ in $a_1$ and $a_2$ make it clear that the equation of the form $p^{-1}a_1p = a_1a_2\cdot i^r$ does not have
any solution for $p$ for any integer $r$. In general, one must have either an even number of $a_i$'s or an odd number of $a_i$'s simultaneously on the left-hand and the right-hand sides of this equation. Looking at the definitions of the automorphisms $b$ and $c$ in Eq.~\eqref{extension-Z4Z2-bc},
we see that $c$ and even powers of $b$ can be used, while the automorphisms $b$, $b^3$, $b^3c$, $bc$ cannot be represented as $PSU(4)$ transformations. 
It is this reasoning which allows us to cross out half of the subgroups from Table~\ref{table-subgroups-D4}, option 1.

Another remark concerns the possible relation of the second and the third lines of Table~\ref{table-subgroups-D4}.
These two $\Z_2$ subgroups, $\lr{b^2c}$ and $\lr{c}$, belong to one conjugacy class and are conjugate inside $D_4$.
That is, the equation $q^{-1} c q = b^2 c$ has a solution --- in fact, four solutions --- inside $D_4$: 
$q = b,\, b^3,\, bc,\, b^3c$. However these are precisely the elements of $D_4$ which cannot be represented by $PSU(4)$ transformations.
Therefore, we will not be able to use the short-cut argument described at the end of Section~\ref{subsection-theorem}
and will need to consider the second and third lines of Table~\ref{table-subgroups-D4} separately.

\subsubsection{Using the first $\Z_2$}

Consider the first $\Z_2$ group from Table~\ref{table-subgroups-D4}, which is generated by $\tilde b \equiv b^2$.
The action of $\tilde b$ is defined in Eqs.~\eqref{extension-Z4Z2-bc}. The corresponding matrix in $SU(4)$
must satisfy the following equations: 
\begin{equation}
\left\{
\begin{aligned}
	& \tilde b^{-1}a_1 \tilde b = a_1^3\cdot i^{r_1}\\
	& \tilde b^{-1}a_2 \tilde b = a_2\cdot i^{r_2}
\end{aligned}
\right.
\quad \Rightarrow \quad
\left\{
\begin{aligned}
	& a_1 \tilde b = \tilde b a_1^3 \cdot i^{r_1}\\
	& a_2 \tilde b = \tilde b a_2 \cdot i^{r_2}
\end{aligned}
\right.\,.\label{extension-Z4Z2-1-eq1}
\end{equation}
Here, $r_1$ and $r_2$ are arbitrary integers.
The presence of factors $i^{r_i}$ reflects the fact that,
although we work with matrices from $SU(4)$, all equalities are defined modulo to the center of $SU(4)$,
that is, modulo to powers of $i$, see details in \cite{Shao:2023oxt}.
Using the methods developed in \cite{Shao:2023oxt}, we solve this system of linear matrix equations
and find that $\tilde b$ must be of the form
\begin{equation}
\tilde b = 
\begin{pmatrix}
	0 & 0 & b_{13} & 0 \\
	0 & b_{22} & 0 & 0 \\
	b_{31} & 0 & 0 & 0 \\
	0 & 0 & 0 & b_{44} 
\end{pmatrix}\,,
\label{extension-Z4Z2-1-eq2}
\end{equation}
where all entries are pure phase factors. Their phases are not constrained by group theory
but can be determined from the phases of the complex coefficients of the corresponding potential $V_1$.

Knowing the action of $\tilde b$ on the generators $a_1$ and $a_2$ does not specify the extension 
$G = (\Z_4 \times \Z_2)\,.\,\Z_2$ uniquely.
We still need to define the value of $\tilde b^2$ inside $G$. One choice is to set $\tilde b^2 = e$, 
which results in a split extension,
also called the semi-direct product: $(\Z_4 \times \Z_2)\rtimes \Z_2 \simeq D_4 \times \Z_2$.
In this expression, the $D_4$ factor is generated by $a_1$ and $\tilde b$, while the last $\Z_2$ factor is the same 
$\Z_2$ subgroup generated by $a_2$.

Consider now the potential $V_1$ in Eq.~\eqref{extension-Z4Z2-a1-again}.
Upon a suitable rephasing of each doublet, we can set the coefficient $\lambda_1$ real. Note that we classify potentials with discrete symmetry up to basis changes, meaning if two potentials are related by a basis change, we only pick one as representative. This is supported by the theorem we proved in Section \ref{subsection-theorem}. Rephasing is clearly a basis change, so we have the freedom to choose a basis in which the matrix form of the transformation $b$ looks nice and simple. This basis change trick applies to subsequent texts too. We draw the reader's attention again on the difference between freedom of basis change and actual symmetry of a potential, which is the invariance of a potential under certain transformation, which can easily lead to confusion.
In the basis which $\lambda_1$ is real, we observe that 
\begin{equation}
\tilde b = \sqrt{i}
\begin{pmatrix}
	0 & 0 & 1 & 0 \\
	0 & -1 & 0 & 0 \\
	1 & 0 & 0 & 0 \\
	0 & 0 & 0 & -1 \\
\end{pmatrix}
\label{Z4xZ2_Option1_by_Z2_Z2xD4}
\end{equation}
becomes a generator of a symmetry of the potential without any additional requirement on the coefficients of $V_1$.
In other words, if we implement the group $\Z_4\times \Z_2$ using option 1, 
$V_1$ acquires an accidental {\em discrete} symmetry,
so that its total symmetry content is automatically enhanced to $D_4 \times \Z_2$.
Thus, the only condition for a $\Z_4\times\Z_2$ model, option 1, to become invariant under $D_4 \times \Z_2$
is that the rephasing-insensitive part $V_0$ is invariant under $\phi_1 \leftrightarrow\phi_3$.

Having obtained the generic form of $\tilde b$ in Eq.~\eqref{extension-Z4Z2-1-eq2}, 
we can also assume that $\tilde b^2 \not = e$ 
but instead lies inside $\Z_4 \times \Z_2$. 
Since the elements $(\tilde b^2)_{11} = (\tilde b^2)_{33}$, the only available options for $\tilde b^2$ are 
$a_2$, $a_1^2$, or $a_1^2a_2$. All three choices will result in non-split extensions.
Under the first choice $\tilde b^2 = a_2$, the generator $\tilde b$ takes, 
in a suitable real-$\lambda_1$ basis, the following form:
\begin{equation}
\tilde b = i^{1/4}
\begin{pmatrix}
	0 & 0 & 1 & 0 \\
	0 & i & 0 & 0 \\
	1 & 0 & 0 & 0 \\
	0 & 0 & 0 & 1 \\
\end{pmatrix}\,.
\label{Z4xZ2_Option1_by_Z2_Z2xQ4}
\end{equation}
Imposing this symmetry on $V_1$ flips the signs of the $\lambda_2$ and $\lambda_3$ terms. 
Thus, we are forced to set $\lambda_2 = \lambda_3 = 0$. But then we have too few rephasing sensitive terms,
and an accidental $U(1)$ symmetry emerges.
Similarly, under the second choice $\tilde b^2 = a_1^2$, we find that we are forced 
to set $\lambda_3 = \lambda_4 = 0$. 
The last choice $\tilde b^2 = a_1^2a_2$ leads to the vanishing $\lambda_2$ and $\lambda_4$. 
Thus, all attempts of construct a model based on a finite non-split extension of the first $\Z_2$ subgroup fail.

The bottom line is: extending $\Z_2 = \{e, \tilde b\}$ by $\Z_4 \times \Z_2$, implemented as in Eq.~\eqref{extension-Z4Z2-a1-again}, 
is only possible for the split extension $(\Z_4 \times \Z_2)\rtimes \Z_2 \simeq D_4 \times \Z_2$.
The extra generator is the permutation $\tilde b$ given by Eq.~\eqref{Z4xZ2_Option1_by_Z2_Z2xD4}.
The only condition for this symmetry group is that $V_0$ is invariant under $\phi_1 \leftrightarrow\phi_3$.
The $V_1$ part of the potential is automatically invariant under $\tilde b$.

	\subsubsection{Using the second $\Z_2$}

Next, we pick $\Z_2\simeq \langle b^2c\rangle$ and denote $d = b^2c$. The corresponding equations are
\begin{equation}
	\left\{
	\begin{aligned}
		& d^{-1}a_1 d = a_1^3\cdot i^{r_1}\\
		& d^{-1}a_2 d = a_2a_1^2\cdot i^{r_2}
	\end{aligned}
	\right.
	\quad \Rightarrow \quad
	\left\{
	\begin{aligned}
		& a_1 d = d a_1^3 \cdot i^{r_1}\\
		& a_2 d = d a_2a_1^2 \cdot i^{r_2}
	\end{aligned}
	\right.\,,\label{extension-Z4Z2-1-eq2-1}
\end{equation}
which have solutions only for $r_1 = 1$, $r_2 = 1$:
\begin{equation}
	d = 
	\begin{pmatrix}
		d_{11} & 0 & 0 & 0 \\
		0 & 0 & 0 & d_{24} \\
		0 & 0 & d_{33} & 0 \\
		0 & d_{42} & 0 & 0 
	\end{pmatrix}\,.
	\label{extension-Z4Z2-1-eq2-2}
\end{equation}
In terms of the Higgs doublets, this transformation permutes $\phi_2\leftrightarrow \phi_4$. 
Since $d^2$ has the form $\diag(x, y, z, y)$, it can be either $e$ or 
a non-trivial element of $\Z_4\times\Z_2$, namely, $a_1^2$, $a_1a_2$, $a_1^3a_2$. 

We first consider the split extension by setting $d^2 = e$. The non-abelian group obtained in this way is labeled as
\begin{equation}
G = (\Z_4 \times \Z_2)\rtimes \Z_2 = {\tt SmallGroup(16,13)}\,\label{G1}
\end{equation}
where we indicated the {\tt GAP} id of this group of order 16. 
This group is also known as the Pauli group $G_1$ generated by the three Pauli matrices under multiplication.
In group theoretic terms, it can be also defined as the central product of $D_4$ and $\Z_4$, 
denoted by $\Z_4\circ D_4$. 
The transformation $d$, in a suitable basis, has the form
\begin{equation}
	d = \sqrt{i}
	\begin{pmatrix}
		-1 & 0 & 0 & 0 \\
		0 & 0 & 0 & 1 \\
		0 & 0 & -1 & 0 \\
		0 & 1 & 0 & 0 \\
	\end{pmatrix}\,.
	\label{extension-Z4Z2-eq2-3}
\end{equation}
We select a basis with real $\lambda_2$, so the invariance under Eq.~\eqref{extension-Z4Z2-eq2-3} only requires $\lambda_3 = \lambda_4$,
together with the constraints in $V_0$ obtained by imposing the invariance under $\phi_2\leftrightarrow\phi_4$. 

We can also try to build non-split extensions by solving matrix equation $d^2 = i^r\cdot a_1^2$. 
A solution exists for $r=-1$ 
and has the following form:
\begin{equation}
	d = 
	\begin{pmatrix}
		i & 0 & 0 & 0 \\
		0 & 0 & 0 & 1 \\
		0 & 0 & i & 0 \\
		0 & 1 & 0 & 0 \\
	\end{pmatrix}\,.
	\label{extension-Z4Z2-eq2-4}
\end{equation}
Invariance under this $d$ requires $\lambda_3 = -\lambda_4$ in the real $\lambda_2$ basis. 
This construction also leads to the same group $G_1$ as in Eq.~\eqref{G1}, 
but we arrived at it using non-split extension procedure.
This is not a coincidence: a potential with a real $\lambda_2$ and $\lambda_3 = -\lambda_4$ 
can be transformed into a potential with another real $\lambda_2$ and $\lambda_3 = \lambda_4$
by the basis change $\phi_{1,2,3} \mapsto \phi_{1,2,3}$, $\phi_4 \mapsto i \phi_4$.
Alternatively, we can note that a potential invariant under $d$ from Eq.~\eqref{extension-Z4Z2-eq2-4}
is also invariant under $d' = d a_2$, whose square is $(d')^2 = -i\cdot \id_4$.
Thus, the group $G = A\,.\,\Z_2$ with $\Z_2 = \lr{d}$ can also be represented as 
$G = A\rtimes\Z'_2$, where $\Z'_2 = \lr{d'}$.
In short, the attempt at a non-split extension leads us to the same result as the split extension
because the two models are related by a mere basis change and represent the same physical situation.

We also checked that the other non-split extension attempts, $d^2 = i^r\cdot a_1a_2$ and $d^2 = i^r\cdot a_1^3a_2$, 
lead to continuous symmetries and are not realizable as 4HDM discrete symmetries.

\subsubsection{Using the third $\Z_2$}

Next, let us take the $\Z_2$ group $\{e,c\}$ from the third line in Table~\ref{table-subgroups-D4}.
The action of $c$ is defined in Eq.~\eqref{extension-Z4Z2-bc}, and the corresponding equations are
\begin{equation}
\left\{
\begin{aligned}
	& c^{-1}a_1 c = a_1\cdot i^{r_1}\\
	& c^{-1}a_2 c = a_2a_1^2\cdot i^{r_2}
\end{aligned}
\right.
\quad \Rightarrow \quad
\left\{
\begin{aligned}
	& a_1 c = c a_1 \cdot i^{r_1}\\
	& a_2 c = ca_2a_1^2 \cdot i^{r_2}
\end{aligned}
\right.\,.\label{extension-Z4Z2-1-eq3}
\end{equation}
This system has solutions only for $r_1 = 2$, $r_2 = 1$, leading to
\begin{equation}
c =
\begin{pmatrix}
	0 & 0 & c_{13} & 0 \\
	0 & 0 & 0 & c_{24} \\
	c_{31} & 0 & 0 & 0 \\
	0 & c_{42} & 0 & 0 \\
\end{pmatrix}\,,
\label{extension-Z4Z2-1-eq4}
\end{equation}
which permutes $\phi_1\leftrightarrow \phi_3$ and $\phi_2\leftrightarrow\phi_4$ simultaneously.
The split extension, $c^2 = e$, leads to the same Pauli group $G_1$ mentioned above. 
In the basis, where $\lambda_1$ and $\lambda_2$ are real,
the transformation $c$ takes the form
\begin{equation}
c =
\begin{pmatrix}
	0 & 0 & 1 & 0 \\
	0 & 0 & 0 & 1 \\
	1 & 0 & 0 & 0 \\
	0 & 1 & 0 & 0 \\
\end{pmatrix}\,,
\label{extension-Z4Z2-1-eq5}
\end{equation}
The only condition we must impose on the potential $V_1$ in Eq.~\eqref{extension-Z4Z2-a1-again}
is $\lambda_3 = \lambda_4$, which must be accompanied with condition that the rephasing-insensitive part $V_0$ 
be invariant under the simultaneous change $\phi_1 \leftrightarrow \phi_3$ and $\phi_2 \leftrightarrow \phi_4$.
Note that this construction is nearly identical to the second $\Z_2$ extension; 
they differ only in $V_0$, not in $V_1$.

Attempts to build non-split extensions proceeds along the same lines as above. 
The only possibility is to set $c^2 = a_1^2$. In the real $\lambda_1$ and $\lambda_2$ basis, 
it leads to 
\begin{equation}
c = \sqrt{i}
\begin{pmatrix}
	0 & 0 & i & 0 \\
	0 & 0 & 0 & 1 \\
	i & 0 & 0 & 0 \\
	0 & 1 & 0 & 0 \\
\end{pmatrix}\,,
\label{extension-Z4Z2-1-eq6}
\end{equation}
Then, if $\lambda_3 = - \lambda_4$ and if, in addition, $V_0$ is invariant under 
$\phi_1 \leftrightarrow \phi_3$ together with $\phi_2 \leftrightarrow \phi_4$,
the full potential becomes invariant under this $c$.
The total symmetry group is again $G_1$; thus, we recover the split extension in disguise.

\subsubsection{Using $\Z_2 \times \Z_2$}

We have already established that $V_1$ is automatically invariant under $\tilde b$ in Eq.~\eqref{Z4xZ2_Option1_by_Z2_Z2xD4}.
Let us now continue with the above case symmetric under $c$ and combine it with $\tilde b$,
which brings us to the first $\Z_2 \times \Z_2$ group of Table~\ref{table-subgroups-D4}.
In order to achieve this symmetry, we need to impose an additional constraint on $V_0$.
As we saw, $c$ requires invariance of simultaneous permutation $\phi_1\leftrightarrow \phi_3$ and $\phi_2\leftrightarrow \phi_4$,
which is a less stringent constraint than invariance under $\phi_1\leftrightarrow \phi_3$ and $\phi_2\leftrightarrow \phi_4$, separately. 
Indeed, the $\fdf{1}{1}\fdf{2}{2} + \fdf{3}{3}\fdf{4}{4}$ in $V_0$ are invariant under $c$ but not under $\tilde{b}\in \Z_2\times\Z_2$. 

Let us now identify the symmetry group emerging in this case.
With $\tilde b \equiv b^2$ given in Eq.~\eqref{Z4xZ2_Option1_by_Z2_Z2xD4} and $c$ given in Eq.~\eqref{extension-Z4Z2-1-eq5},
we can verify that $\tilde b^2 = c^2 = e$ as well as $[\tilde b, c] = e$.
Thus, we get the split extension
\begin{equation}
	G = (\Z_4\times\Z_2)\rtimes (\Z_2 \times \Z_2) = {\tt SmallGroup(32,49)}\,,\label{extension-Z4Z2-1-extra}
\end{equation}
which is known as the extra-special group of order 32, plus-type, and is labeled as $\mathbf{2^{1+4}_+}$.

It is also possible to build $c$ such that $c^2=e$, but the commutator $[\tilde b,c] \not = e$ although it still lies inside $\Z_4 \times \Z_2$.
This allows us to consider non-split extensions of the form $ (\Z_4\times\Z_2)\,.\,(\Z_2 \times \Z_2)$. 
We checked all choices for $[\tilde b,c]$ and found that many lead to continuous symmetries.
For example, if $[\tilde b,c] = a_1a_2$, then $c$ requires $\lambda_1$ to be imaginary while $\tilde b$ requires
it to be real. Thus, we must set $\lambda_1 = 0$, but in this case the potential $V_1$ acquires the continuous symmetry 
of the form diag$(e^{i\alpha}, 1, e^{-i\alpha}, 1)$.
The net result is that, by combining $\tilde b$ and any $c$ of the form of Eq.~\eqref{extension-Z4Z2-1-eq4}, 
we can only arrive at ${\tt SmallGroup(32,49)}$ through a split or non-split extension procedure. 
However since the group is the same, we end up only at the split extension of
$\Z_2\times \Z_2$ by $\Z_4\times\Z_2$.

This construction wraps up all the extension cases we have with the first option for $\Z_4\times\Z_2$.

\subsection{Building $\Z_4\times\Z_2$ extensions: option 2}\label{subsection-Z4Z4-option2}

Next, we consider the second way the $\Z_4\times \Z_2$ group can be implemented in the 4HDM. 
The potential and the generators $a_1$ and $a_2$ are given in Eqs.~\eqref{extension-Z4Z2-a2};
we repeat them here for the reader's convenience:
\begin{eqnarray}
\mbox{option 2:}&& V_2 = \lambda_1\fdf{1}{2}^2 + \lambda_2\fdf{3}{4}^2 + \lambda_3\fdf{1}{3}\fdf{2}{4} + \lambda_4\fdf{1}{4}\fdf{2}{3}  + h.c.\nonumber\\
&& a_1 = \diag(i,i,-1,1)\,,\quad a_2 = \diag(-1,1,-1,1)\,.  \label{extension-Z4Z2-a2-again}
\end{eqnarray}
Unlike in option 1, these generators do not carry the prefactors $\sqrt{i}$. 
As a result, all automorphisms of the $\Z_4\times\Z_2$ can be represented as $PSU(4)$ transformations.
This simplifies our task as we do not need to consider the subgroups in Table~\ref{table-subgroups-D4}
belonging to the same conjugacy class.

We will now describe the results giving fewer details than before because the methods are the same.
We start again with the first $\Z_2$ from Table~\ref{table-subgroups-D4}.
Following the similar steps, we can solve for the matrix $\tilde b$, which in this case exchanges 
$\phi_1 \leftrightarrow \phi_2$ and $\phi_3 \leftrightarrow \phi_4$.
In the basis of real $\lambda_1$ and $\lambda_2$, this transformation is automatically a symmetry of $V_2$.
Thus, we only need to require $V_0$ to be invariant under these permutations, and in this way we again
obtain the non-abelian group $G = (\Z_4 \times \Z_2)\rtimes \Z_2 = D_4\times \Z_2$.

An attempt to build a non-split extension leads to the condition $\tilde b^2 = a_1^2$. 
But such a $\tilde b$ forces us to set $\lambda_3 = \lambda_4 = 0$, leading to an accidental continuous symmetry.

From the next conjugacy class we select the third $\Z_2$ subgroup $\{e, c\}$.
In the real $\lambda_1$ basis, the matrix $c$ only exchanges $\phi_1 \leftrightarrow \phi_2$. 
We need to require that $\lambda_3 = \lambda_4$ and to make sure that $V_0$ is invariant under $\phi_1 \leftrightarrow \phi_2$.
In this way, we again arrive at the the Pauli Group $G_1 = {\tt SmallGroup(16,13)}$ as in Eq.~\eqref{G1}.
Attempts at non-split extensions do not produce any new options.

Unlike for option 1, the third conjugacy class is now available, lines 4 and 5 of Table~\ref{table-subgroups-D4}.
Following line 4, we choose the subgroup $\{e, bc\}$ to construct extensions. 
The solution for the generator $d = bc$ corresponds, in a suitable basis, to the simultaneous exchange 
$\phi_1 \leftrightarrow \phi_4$ and $\phi_2 \leftrightarrow \phi_3$:
\begin{equation}
	d =  
	\begin{pmatrix}
		0 & 0 & 0 & 1 \\
		0 & 0 & 1 & 0 \\
		0 & 1 & 0 & 0 \\
		1 & 0 & 0 & 0 \\
	\end{pmatrix}\,.
	\label{extension-Z4Z2-2-eq0}
\end{equation}
The conditions for this symmetry to be present are $\lambda_1 = \lambda_2^*$ and
real $\lambda_3$ and $\lambda_4$, plus matching conditions on $V_0$. 
The symmetry group we obtain by combining $\Z_4\times \Z_2$ generated by $a_1$, $a_2$ and
$\Z_2$ generated by $d$ is also a semi-direct product $(\Z_4\times \Z_2)\rtimes \Z_2$ but a different one:
\begin{equation}
G = (\Z_4\times \Z_2)\rtimes \Z_2 = {\tt SmallGroup(16,3)}\,.
\end{equation} 
The non-split extension procedure with $d^2 = a_1^2a_2$ brings us again to this group,
which is thus a split extension in disguise.
Line 5 of Table~\ref{table-subgroups-D4} can be analyzed in a similar way and leads to the same group. 

If $V_0$ is invariant under 
$\phi_1 \leftrightarrow \phi_2$ and, independently, under $\phi_3 \leftrightarrow \phi_4$,
the full symmetry group is further enhanced,
through either a split or a non-split extension procedure.
We found that extension by line 7 of Table~\ref{table-subgroups-D4} leads to the same group {\tt SmallGroup(32,49)}
as in Eq.~\eqref{extension-Z4Z2-1-extra}, while extension by line 8 produces a new option, 
the group denoted as {\tt SmallGroup(32,27)}.

With option 2, we can also use the subgroup $\Z_4$ to build an extension, which was impossible for option 1.
The action of the generator $b$ defined in Eq.~\eqref{extension-Z4Z2-bc} tells us that the symmetry group 
we are going to construct is the following group of order 32:
\begin{equation}
	G = (\Z_4\times\Z_2)\rtimes\Z_4\simeq {\tt SmallGroup(32, 6)}\,,\label{extension-Z4Z2-2-eq1}
\end{equation}
which is also known as the faithful semi-direct product $(\Z_2)^3\rtimes\Z_4$.
The equations $b^{-1}a_1b =a_1 a_2 \cdot i^{r_1}$ and 
$b^{-1}a_2b = a_1^2 a_2 \cdot i^{r_2}$ have the following generic solution:
\begin{equation}
b = 
\begin{pmatrix}
	0 & 0 & b_{13} & 0 \\
	0 & 0 & 0 & b_{24} \\
	0 & b_{32} & 0 & 0 \\
	b_{41} & 0 & 0 & 0 \\
\end{pmatrix}\,,
\label{extension-Z4Z2-2-eq2}
\end{equation}
that is, the cyclic permutation $\phi_1 \mapsto \phi_3 \mapsto \phi_2 \mapsto \phi_4 \mapsto \phi_1$,
possibly corrected by the phase rotations.
Since $b^4$ is proportional to $\id_4$, the extension can only be split.
This transformation is the symmetry of $V_2$ if $\lambda_1 = \lambda_2$ are real
and $\lambda_3 = \lambda_4^*$. In addition, it strongly constrains the rephasing-insensitive part $V_0$
reducing it to
\begin{eqnarray}
V_0 &=& m^2 \left(\fdfn{1}{1} + \fdfn{2}{2} + \fdfn{3}{3} + \fdfn{4}{4}\right)
+ \Lambda \left(\fdfn{1}{1} + \fdfn{2}{2} + \fdfn{3}{3} + \fdfn{4}{4}\right)^2\nonumber\\
&&+\Lambda' \left(\fdfn{1}{1} + \fdfn{2}{2}\right)\left(\fdfn{3}{3} + \fdfn{4}{4}\right) 
+\Lambda'' \left[\fdf{1}{1}\fdf{2}{2} + \fdf{3}{3}\fdf{4}{4}\right]\nonumber\\
&& + \tilde\Lambda' \left(|\fdfn{1}{3}|^2 + |\fdfn{2}{3}|^2 + |\fdfn{1}{4}|^2 + |\fdfn{2}{4}|^2 \right)
+ \tilde\Lambda''\left(|\fdfn{1}{2}|^2 + |\fdfn{3}{4}|^2\right)\,.\label{extension-Z4Z2-2-eq3}
\end{eqnarray}

The final step in our classification of the extensions based on $\Z_4\times \Z_2$ is to use the full $D_4$.
For the split extensions, the resulting group is
\begin{equation}
G = (\Z_4\times\Z_2) \rtimes D_4 \simeq {\tt SmallGroup(64, 138)}\,.\label{extension-Z4Z2-2-D4}
\end{equation}
This is a group of order 64 also known as the unitriangular matrix group of degree 4 over the field $\mathbb{F}_2$, 
denoted as $UT(4,2)$. 
In order to obtain this symmetry group, we just need to impose invariance under $b$ and $c$.
This is achieved by using $V_0$ given in Eq.~\eqref{extension-Z4Z2-2-eq3} and $V_2$ of the form
\begin{equation}
	V_2 = \lambda_1\left[\fdf{1}{2}^2 + \fdf{3}{4}^2 + h.c.\right] + 
	\lambda_3\left[\fdf{1}{3}\fdf{2}{4} + \fdf{1}{4}\fdf{2}{3} + h.c.\right], \label{extension-Z4Z2-2-eq4}
\end{equation}
with all parameters real.
One can describe the full symmetry content of the potential $V_0 + V_2$ as invariance under $a_1$ and $a_2$ 
given in Eq.~\eqref{extension-Z4Z2-a2-again} as well as the following types of permutations: 
$\phi_1 \leftrightarrow \phi_2$, $\phi_3 \leftrightarrow \phi_4$, and the cyclic permutation
$\phi_1 \mapsto \phi_3 \mapsto \phi_2 \mapsto \phi_4 \mapsto \phi_1$.
This potential contains only eight free parameters and leads to remarkably constrained scalar sector of the model.

\subsection{Building $\Z_4\times\Z_2$ extensions: option 3}

Option 3 for implementing the group $\Z_4\times\Z_2$ in the 4HDM, given in Eq.~\eqref{extension-Z4Z2-a3},
is a special one. Due to the presence of the factor $i^{3/4}$ in the definition of $a_1$, 
most of actions defined in Eq.~\eqref{extension-Z4Z2-bc} do not admit solutions.
The only automorphism which can have a solution is $d = bc$, which maps $a_1$ to $a_1a_2$ and keeps $a_2$ unchanged.
Using the same methods as before, we find that the generator $d$ swaps $\phi_1$ and $\phi_3$, 
possibly accompanied with phase factors.

If $d^2 = e$, we deal with the split extension $(\Z_4\times \Z_2)\rtimes \Z_2 = {\tt SmallGroup(16,3)}$,
just as we already encountered in option 2. In order to arrive at this symmetry group, we need to require that the coefficients of $V_3$ 
in Eq.~\eqref{extension-Z4Z2-a3}, in a suitable basis, satisfy $\lambda_1^* = \lambda_3$ and $\lambda_2$ is real.
As for the non-split extensions, the only choice which does not lead to continuous accidental symmetries is 
$d^2 = a_1^2 a_2$, which leads to the same symmetry group ${\tt SmallGroup(16,3)}$.


\section{Extensions based on $\Z_2\times\Z_2\times\Z_2$}\label{section-Z2Z2Z2}

\subsection{$(\Z_2)^3$ as a vector space and its automorphisms}

The rephasing group $\Z_2\times\Z_2\times\Z_2$, or $(\Z_2)^3$ for short, can always be implemented within the 4HDM via sign flips of individual doublets. 
We have three, not four, $\Z_2$ factors just because flipping the signs of the first three doublets is equivalent to flipping the signs of fourth doublet.
The three generators of the group $(\Z_2)^3$ can be selected as
\begin{equation}
a_1 = \sqrt{i}\cdot\diag(-1,1,1,1),\quad a_2 = \sqrt{i}\cdot\diag(1,-1,1,1),\quad a_3 = \sqrt{i}\cdot\diag(1,1,-1,1)\,,\label{E8-eq1}
\end{equation}
where the factors $\sqrt{i}$ are required to guarantee $\det a_i = 1$.
The potential invariant under sign flips includes, in addition to the rephasing-insensitive part $V_0$, the following collection of quartic terms:
\begin{equation}
V_1 = \lambda_{12}\fdf{1}{2}^2 + \lambda_{13}\fdf{1}{3}^2 + \lambda_{23}\fdf{2}{3}^2 
+ \lambda_{14}\fdf{1}{4}^2 + \lambda_{24}\fdf{2}{4}^2 + \lambda_{34}\fdf{3}{4}^2 + h.c.\,,
\label{E8-V1}
\end{equation}
where all the coefficients can be complex.
Using the package {\tt 4HDM Toolbox} \cite{TheCode}, we verified that any implementation of the group $\Z_2\times\Z_2\times\Z_2$
in the 4HDM can indeed be represented, in a suitable basis, by Eq.~\eqref{E8-eq1}.

It is convenient to view the group $(\Z_2)^3$ as a three-dimensional vector space over the finite field with two elements $\mathbb{F}_2 = \{0, 1\}$.
Take a group element $g \in (\Z_2)^3$, which we write in the usual multiplicative notation as $g = a_1^{m_1}a_2^{m_2}a_3^{m_3}$. 
Now think of a three-dimensional space with the basis vectors $\hat{a}_1$, $\hat{a}_2$, $\hat{a}_3$,
which represent the generators $a_1$, $a_2$, and $a_3$. Then $g$ can be associated with the linear combination
$\hat{g} = m_1 \hat{a}_1 + m_2 \hat{a}_2 + m_3 \hat{a}_3$, or simply $(m_1, m_2, m_3)^T$, where the numbers $m_i \in \mathbb{F}_2 = \{0, 1\}$
represent the powers of the three commuting generators. 
It is straightforward to check that the axioms of a vector space are satisfied.

This look at the group $(\Z_2)^3$ as a three-dimensional vector space $(\mathbb{F}_2)^3$ offers a nice way 
to represent and study the automorphisms of $(\Z_2)^3$.
Indeed, an automorphism $f: (\Z_2)^3 \to (\Z_2)^3$ is a rule which maps every $g \in (\Z_2)^3$ to its image $f(g) \in (\Z_2)^3$
and satisfies the usual automorphism axioms.
In the vector space language, this automorphism maps the vector $\hat{g} \in (\mathbb{F}_2)^3$ 
to another vector $f(\hat{g})$ of the same space $(\mathbb{F}_2)^3$. This map is linear and invertible. Therefore, it can be represented 
as a $3\times 3$ matrix acting on $(\mathbb{F}_2)^3$, and its entries must also be from $\mathbb{F}_2$. 
For example, the exchange of generators $a_1 \leftrightarrow a_3$, $a_2 \mapsto a_2$ is indeed an automorphism and is represented by the matrix
\begin{equation}
{\mathfrak b} = 
\begin{pmatrix}
	0 & 0 & 1\\
	0 & 1 & 0\\
	1 & 0 & 0
\end{pmatrix}\,.\label{E8-b-example1}
\end{equation}
Here and below, we use the fraktur letters such as ${\mathfrak b}$ to represent automorphisms as acting 
in the vector space $(\mathbb{F}_2)^3$.

It may happen that an automorphism of $(\Z_2)^3$ can be represented by an $SU(4)$ transformation of the four doublets.
For example, if ${\mathfrak b}$ is given by Eq.~\eqref{E8-b-example1},
we apply the same methods as introduced before and solve the following system of equations
\begin{equation}
	\left\{
	\begin{aligned}
		& b^{-1}a_1b = a_3\cdot i^{r_1}\\
		& b^{-1}a_2b = a_2\cdot i^{r_2}\\
		& b^{-1}a_3b = a_1\cdot i^{r_3}\\
	\end{aligned}
	\right.\;,\quad 
	\mbox{with the solution} \quad 
	b = 
	\begin{pmatrix}
		0 & 0 & b_{13} & 0\\
		0 & b_{22} & 0 & 0\\
		b_{31} & 0 & 0 & 0\\
		0 & 0 & 0 & b_{44}\\
	\end{pmatrix}\ \mbox{for}\ r_i=0\,.\label{E8-b-example2}
\end{equation}
Since automorphisms are invertible, $\det {\mathfrak b} \not = 0$. Then, as we work over the field $\mathbb{F}_2$, 
this determinant can only be equal to one.
This is why the automorphism group of $(\Z_2)^3$, the collection of all matrices ${\mathfrak b}$ with $\det {\mathfrak b} = 1$,
can be written as $GL(3,2) = SL(3,2)$.  
The order of this group is easy to establish. The group $(\Z_2)^3$ contains seven non-trivial elements of order 2. 
When defining an automorphism $f$, we can map $a_1$ to any of these seven, then we map $a_2$ to any of the remaining six,
and finally map $a_3$ to any of the remaining elements barring $f(a_1)f(a_2)$. Thus, 
$|\Aut((\Z_2)^3)| = 7\times 6 \times 4 = 168$.

We draw once more the reader's attention to the distinctions between the two representations we use  in this section to study automorphisms: 
the $GL(3,\mathbb{F}_2)$ representation, in which automorphisms are written as $3\times 3$ matrices and are denoted by fraktur symbols,
and the $PSU(4)$ representation, provided it exists, which views the same automorphism as a $4\times 4$ matrix
acting on the four Higgs doublets, labeled with normal symbols.

Next, we use the database {\ttfamily GAP} to study some properties of $\Aut((\Z_2)^3)$ and its subgroups.
The group itself is labeled as {\ttfamily SmallGroup(168,42)} and has 179 subgroups.
Fortunately, many of these subgroups are conjugate to each other. Thanks to the Theorem proved in Section~\ref{subsection-theorem},
we only need to classify the conjugacy classes of these subgroups and then consider only one example in each class.
Using {\ttfamily GAP}, we found that this group has 13 conjugacy classes of the non-trivial proper subgroups, which we list in Table~\ref{table-subgroup_GL32}.
\begin{table}[H]
\centering
\begin{tabular}[t]{c|cccccccccc}
	\toprule
	\makecell[c]{Representative\\Subgroups} & $\Z_2$ & $\Z_3$ & $\Z_4$ & $\Z_2\times\Z_2$ & $S_3$ & {\color{gray}$\Z_7$} & $D_4$ & $A_4$ & {\color{gray}$\Z_3\rtimes\Z_7$} & $S_4$ \\
	\midrule
	\makecell[c]{Number of \\ conjugacy classes} & 1 & 1 & 1 & 2 & 1 & {\color{gray}1} & 1 & 2 & {\color{gray}1} & 2 \\
	\bottomrule
\end{tabular}
\caption{The conjugacy classes of all non-trivial proper subgroups of $GL(3,2)$. The classes of subgroups which contain $\Z_7$ are not available in the 4HDM (see main text)
	and are shown in gray.}
\label{table-subgroup_GL32}
\end{table}

It turns out that not all automorphisms of the abstract group $(\Z_2)^3$ can be defined when $(\Z_2)^3$ is implemented 
as the symmetry group of the 4HDM as in Eq.~\eqref{E8-eq1}. Let us consider, for example, the automorphism
$a_1 \mapsto a_3$, $a_2 \mapsto a_1$, $a_3 \mapsto a_2a_3$. 
Its matrix $\mathfrak{f}$ and the system of equations for $f$ are
\begin{equation}
\mathfrak{f} = 
\begin{pmatrix}
	0 & 0 & 1\\
	1 & 0 & 0\\
	0 & 1 & 1
\end{pmatrix},\;
\mbox{defining}\;
\left\{
\begin{aligned}
	& f^{-1}a_1f = a_3\\
	& f^{-1}a_2f = a_1\\
	& f^{-1}a_3f = a_2a_3\,.
\end{aligned}
\right.\label{E8-f-fail}
\end{equation}
The map $\mathfrak{f}$ is well defined and has order 7, which can be verified by direct multiplication.
However the system of equations has no solutions for $f$. The obstacle is the last equation: 
even with the freedom of multiplication by integer powers of $i$, the equation $a_3f=fa_2a_3\cdot i^r$ cannot produce invertible matrices $f$.
The root of the problem is that $a_2a_3$ in the right-hand side does not possess the $\sqrt{i}$ factor to match this factor from $a_3$ in the left-hand side.
We conclude that extensions of $\Z_7$ --- and in fact of all groups which contain $\Z_7$ as a subgroup --- are impossible in the 4HDM.  

The lesson we draw from the above example is that, when constructing matrices $\mathfrak{b}$, we can only use four rows,
$(1,1,1)$, $(1,0,0)$, $(0,1,0)$, and $(0,0,1)$, and we must pick up three different ones, in any order. 
In this way, we can construct automorphisms of orders 2, 3 and 4.

Having done this exercise, we found two distinct families of transformations of order 2.
The three transformations
\begin{equation}
	{\mathfrak b}_1' = 
\begin{pmatrix}
	1 & 1 & 1\\
	0 & 0 & 1\\
	0 & 1 & 0
\end{pmatrix}\,,
\quad
	{\mathfrak b}_2' = 
\begin{pmatrix}
	0 & 0 & 1\\
	1 & 1 & 1\\
	1 & 0 & 0
\end{pmatrix}\,,
\quad
	{\mathfrak b}_3' = 
\begin{pmatrix}
	0 & 1 & 0\\
	1 & 0 & 0\\
	1 & 1 & 1
\end{pmatrix}
\label{E8-b'-examples}
\end{equation}
form the first family and share the property that they can be written as squares of order-4 transformations from $\Aut(A)$.
The second family contains six transformations of order 2 such as ${\mathfrak b}$ in Eq.~\eqref{E8-b-example1} 
and products of the type $({\mathfrak b}_1')^{-1}{\mathfrak b}{\mathfrak b}_1'$;
the members of this family cannot be written as a square of any order-4 transformation.

It turns out that all the transformations within each family are linked by some transformations $q \in \Aut(A)$
which can be represented by $\tau \in PSU(4)$. Therefore, these two families are exactly the two conjugacy classes
of physically equivalent models which we described in Section~\ref{subsection-theorem}.
To build the full list of $\Z_2$-based extensions, it suffices to consider one representative transformation
from each family.

As for the transformations of order 3 and order 4, we found that all transformations of the same order can be linked 
by a $PSU(4)$ transformations. Thus, we need to consider only one representative $\Z_3$ and $\Z_4$ group.

\subsection{Building extensions}

Let us begin with extending $\Z_2$. Table~\ref{table-subgroup_GL32} tells us that all $\Z_2$ subgroups are conjugate to each other inside $GL(3,2)$,
but the above discussion suggests that we need to separately consider two representative groups,
which are not linked by any $PSU(4)$ transformation.

For the first example, we can select the automorphism ${\mathfrak b}$ in Eq.~\eqref{E8-b-example1}, 
which leads to $b$ of the form given in Eq.~\eqref{E8-b-example2}.
Squaring $b$, we get the diagonal matrix of the type $b^2=\diag(x,y,x,z)$, which can be proportional to $\mathbf{1}_4,\;  a_2,\;  a_1a_3,\;  a_1a_2a_3$.
The choice $b^2 = \mathbf{1}_4$ leads to the split extension $\Z_2\times D_4$. 
In a suitable basis, $b$ corresponds to the exchange $\phi_1 \leftrightarrow \phi_3$
and a sign flip, for example, of $\phi_2$. In order for $V_1$ in Eq.~\eqref{E8-V1} to be invariant under this transformation,
we require, in the real $\lambda_{13}$  basis, that $\lambda_{12} = \lambda_{23}^*$ and $\lambda_{14} = \lambda_{34}$.
Also, the rephasing-insensitive part of the potential, $V_0$, must be invariant under the exchange $\phi_1 \leftrightarrow \phi_3$.

For a non-split extension, we can select $b$ as in Eq.~\eqref{Z4xZ2_Option1_by_Z2_Z2xQ4}, so that $b^2 = a_2$.
In the real $\lambda_{13}$ basis, $V_1$ acquires this symmetry if $\lambda_{12} = -\lambda_{23}^*$, $\lambda_{14} = \lambda_{34}$,
and in addition $\lambda_{24} = 0$. The conditions for the other two non-split extensions can be immediately constructed.
In all three cases of non-split extensions, the total symmetry group is {\tt SmallGroup(16,3)}, which we already encountered
in Section~\ref{subsection-Z4Z4-option2} when extending $\Z_2$ by $\Z_4 \times \Z_2$.
We did not check whether these two versions of the 4HDM invariant under {\tt SmallGroup(16,3)} can be related by 
a basis change and lead to the same phenomenology.

The second $\Z_2$ example is ${\mathfrak b}_2'$ in Eq.~\eqref{E8-b'-examples}, which corresponds to 
the simultaneous transformation $\phi_1 \leftrightarrow \phi_3$ and $\phi_2 \leftrightarrow \phi_4$.
Clearly, it is equivalent to the cyclic permutation $\phi_1 \mapsto \phi_2 \mapsto \phi_3 \mapsto \phi_4 \mapsto \phi_1$ applied twice.
In this case, we can only have a split extension, and the total symmetry group is $\Z_2\times D_4$.

Table~\ref{table-subgroup_GL32} indicates, and the above discussion confirms, 
that we need to consider only one example of the $\Z_3$ subgroup.
A suitable order-3 automorphism is 
\begin{equation}
{\mathfrak c} = 
\begin{pmatrix}
	0 & 1 & 0\\
	0 & 0 & 1\\
	1 & 0 & 0
\end{pmatrix}
\end{equation}
which, in a suitable basis, leads to 
\begin{equation}
c = 
\begin{pmatrix}
	0 & 1 & 0 & 0\\
	0 & 0 & 1 & 0\\
	1 & 0 & 0 & 0\\
	0 & 0 & 0 & 1\\
\end{pmatrix}\,.\label{E8-c-1}
\end{equation}
This $c$ leads to the split extension $(\Z_2)^3\rtimes\Z_3\simeq A_4 \times \Z_2$, which reduces the potential
$V_1$ to 
\begin{equation}
V_1 = \lambda\left[\fdf{1}{2}^2 + \fdf{2}{3}^2 + \fdf{3}{1}^2\right]
+ \lambda'\left[\fdf{1}{4}^2 + \fdf{2}{4}^2 + \fdf{3}{4}^2\right] + h.c.
\label{E8-V1-Z3}
\end{equation}
Similarly, extension of $\Z_4$ can be constructed with the aid of 
\begin{equation}
{\mathfrak d} = 
\begin{pmatrix}
	0 & 0 & 1\\
	1 & 1 & 1\\
	0 & 1 & 0
\end{pmatrix}\,,
\quad\mbox{leading to} \quad
d = 
\begin{pmatrix}
	0 & 0 & 1 & 0\\
	0 & 0 & 0 & 1\\
	0 & 1 & 0 & 0\\
	1 & 0 & 0 & 0\\
\end{pmatrix}\,,\label{E8-d-1}
\end{equation}
that is, the same cyclic permutation $\phi_1 \mapsto \phi_3 \mapsto \phi_2 \mapsto \phi_4 \mapsto \phi_1$ as we encountered 
in Eq.~\eqref{extension-Z4Z2-2-eq2} leading to the same group $(\Z_2)^3\rtimes\Z_4 \simeq {\tt SmallGroup(32, 6)}$.
The potential $V_1$ simplifies then to 
\begin{equation}
V_1 = \lambda\left[\fdf{1}{2}^2 + \fdf{3}{4}^2\right]
+ \lambda'\left[\fdf{1}{3}^2 + \fdf{3}{2}^2 + \fdf{2}{4}^2 + \fdf{4}{1}^2\right] + h.c.\,,
\label{E8-V1-Z4}
\end{equation}
with real $\lambda$ and, in general, complex $\lambda'$,
while the rephasing-insensitive part $V_0$ takes the form as in Eq.~\eqref{extension-Z4Z2-2-eq3}.
The total potential $V_0 + V_1$ has nine free parameters and leads to a very constrained scalar sector.

Continuing with the subgroups in Table~\ref{table-subgroup_GL32}, we deal next with the group $\Z_2\times\Z_2$, 
which corresponds to two distinct conjugacy classes in $GL(3,2)$.
With the aid of {\ttfamily GAP}, we select these two pairs of generators: 
\begin{equation}
\Z_2\times\Z_2 \subset GL(3,2),\,
\mbox{option 1:}\qquad 
\mathfrak{b}_1 = 
\begin{pmatrix}
	0 & 1 & 0\\
	1 & 0 & 0\\
	0 & 0 & 1
\end{pmatrix}\;,\quad 
\mathfrak{b}_3' 
\,,\label{E8-Z2Z2-1}
\end{equation}
\begin{equation}
\Z_2\times\Z_2 \subset GL(3,2),\,
\mbox{option 2:}\qquad 
\mathfrak{b}_1'\,, \quad 
\mathfrak{b}_3'\,.\label{E8-Z2Z2-2}
\end{equation}
As usual, in each case, we have options for split vs. non-split extensions, similar to the cases we have found earlier.
Skipping technical details, we only provide the final result:
the first option leads only to $(\Z_2)^3\rtimes(\Z_2\times\Z_2) \simeq {\tt SmallGroup(32,49)}$,
constructed either as split or non-split extension,
while the second option produces the groups {\tt SmallGroup(32,27)} (split extension) and {\tt SmallGroup(32,34)} (non-split extension).
The constraints on the potential can also be established using the methods we have already used before.

Moving on, we select a representative $S_3 \subset GL(3,2)$ generated by $b$ in Eq.~\eqref{E8-b-example2}
and $c$ in Eq.~\eqref{E8-c-1}. We checked that only split extension is possible, leading to total symmetry group 
$(\Z_2)^3\rtimes S_3\simeq S_4\times \Z_2$. In essence, this is the same symmetry group $S_4$ acting on the first three doublets,
which we had already in the 3HDM, times the $\Z_2$ group of independent sign flip of $\phi_4$.
The potential $V_1$ in this case is the same as for the $(\Z_2)^3\rtimes \Z_3\simeq A_4\times \Z_2$
and was given in Eq.~\eqref{E8-V1-Z3}. The only extra condition now is that $\lambda$ in Eq.~\eqref{E8-V1-Z3} must be real.

A representative subgroup $D_4 \subset GL(3,2)$ can be generated by the same $\mathfrak{d}$ as in Eq.~\eqref{E8-d-1} and $\mathfrak{b}_1$
in Eq.~\eqref{E8-Z2Z2-1} as they satisfy $\mathfrak{b}_1^{-1}\mathfrak{d}\mathfrak{b}_1 = \mathfrak{d}^{-1}$.
The resulting group is $G = (\Z_2)^3 \rtimes D_4 \simeq UT(4,2) \simeq {\tt SmallGroup(64, 138)}$, the same group
as in Eq.~\eqref{extension-Z4Z2-2-D4}. The potential $V_0$ takes the form \eqref{extension-Z4Z2-2-eq3} while the $V_1$ part 
is the same as in Eq.~\eqref{E8-V1-Z4} but now both $\lambda$ and $\lambda'$ being real.

The next subgroup to consider is $A_4 \subset GL(3,2)$. Table~\ref{table-subgroup_GL32} indicates two distinct conjugacy classes
for $A_4$. However it turns out that the same constraint which forbade the $\Z_7$ subgroup
forbids also one of the $A_4$ conjugacy classes.
The remaining one can be generated by the familiar $\mathfrak{b}_3'$ in Eq.~\eqref{E8-b'-examples} and $\mathfrak{c}$ in Eq.~\eqref{E8-c-1}
because they satisfy the relations defining the $A_4$ group:
$(\mathfrak{b}_3')^{3} = \mathfrak{c}^2 = (\mathfrak{cb'}_3)^3 = \mathfrak{e}$.
In terms of $b$ and $c$, we arrive at the defining presentation of $A_4$ as a group
of even permutations of four doublets.

The total symmetry group obtained through this construction is $(\Z_3)^3\rtimes A_4 \simeq {\tt SmallGroup(92,70)}$ of order 92.
Imposing invariance under the $A_4$ group of permutations and sign flips of individual doublets 
dramatically constrains the potential, with 
\begin{eqnarray}
V_0 &=& m^2 \left(\fdfn{1}{1} + \fdfn{2}{2} + \fdfn{3}{3} + \fdfn{4}{4}\right)
+ \Lambda \left(\fdfn{1}{1} + \fdfn{2}{2} + \fdfn{3}{3} + \fdfn{4}{4}\right)^2\nonumber\\
&&+\Lambda' \left[(\fdfn{1}{1})^2 + (\fdfn{2}{2})^2 + (\fdfn{3}{3})^2 + (\fdfn{4}{4})^2\right]\nonumber\\
&& + \Lambda'' \left(|\fdfn{1}{3}|^2 + |\fdfn{2}{3}|^2 + |\fdfn{1}{4}|^2 + |\fdfn{2}{4}|^2 
+ |\fdfn{1}{2}|^2 + |\fdfn{3}{4}|^2\right)\label{E8-A4-V0}
\end{eqnarray}
and 
\begin{equation}
V_1 = \lambda\left[\fdf{1}{2}^2 + \fdf{2}{3}^2 + \fdf{3}{1}^2 + \fdf{1}{4}^2 + \fdf{2}{4}^2 + \fdf{3}{4}^2 + h.c.\right]\,.
\label{E8-A4-V1}
\end{equation}
However, upon a quick inspection, it becomes clear that this potential is invariant under {\em all} permutations of the four doublets,
not only the positive-signature ones. Therefore, the group $(\Z_3)^3\rtimes A_4$ we just constructed is not realizable in the 4HDM because
it automatically leads a additional discrete symmetries. The total symmetry content of this potential is
\begin{equation}
G = (\Z_2)^3\rtimes S_4 \simeq {\tt SmallGroup(192,955)}\,.\label{E8-S4}
\end{equation}
It is remarkable that the 4HDM scalar sector, with so many symmetries and so few free parameters, 
does not possess an accidental continuous symmetry.

\section{Discussion and conclusions}\label{section-conclusions}

Global symmetries are a powerful feature of multi-Higgs-doublet models as they often lead to symmetry-induced phenomenological features.
The symmetry options available in the 2HDM and 3HDM have been explored in hundreds of papers, 
and the methods developed there can be also used in even more elaborate scalar sectors.
There is also a significant literature on models with four Higgs doublets,
almost all of them based on a specific symmetry group
(see a historical overview in our previous paper on the subject \cite{Shao:2023oxt}).
However no attempt has been made up to now to classify symmetry options available in the 4HDM.
We started in \cite{Shao:2023oxt} --- and continue in the present paper ---
our quest for classification of finite non-abelian symmetry groups
which can be imposed on the scalar sector of the 4HDM without causing accidental continuous symmetries.

Here, we employ the same group extension technique which was so successful in the 3HDM \cite{Ivanov:2012ry,Ivanov:2012fp}.
Namely, we start with the finite abelian symmetry groups $A$ realizable in the 4HDM, which are known from \cite{Ivanov:2011ae}
and given in Table~\ref{table-abelian}, 
and then constructed non-abelian groups as group extensions of the form
$A\rtimes K$ (split extensions) or $A\,.\,K$ (non-split extensions), where $K \subseteq \Aut(A)$.
In the previous paper \cite{Shao:2023oxt}, we found all finite non-abelian groups
which emerge from cyclic groups $A$ (see the left half of Table~\ref{table-abelian}).
In the present work, we extended this analysis to the rephasing groups $A$ 
which are products of cyclic groups, namely,
to $A = \Z_2\times\Z_2$, $\Z_4\times\Z_2$, and $(\Z_2)^3$.

\begin{table}[H]
\centering
\begin{tabular}[t]{cccc}
	\toprule
	$A$ & extensions & $G$ & $|G|$ \\
	\midrule
	\multirow{3}{*}{$\Z_2\times\Z_2$} & $A\rtimes\Z_2$ & $D_4$ & 8 \\[1mm]
	& $A\rtimes\Z_3$ & $A_4$ & 12  \\[1mm]
	& $A\rtimes S_3$ & $S_4$ & 24 \\[1mm]
	\midrule
	\multirow{5}{*}{$\Z_4\times\Z_2$} 
	& $A\rtimes\Z_2$ & $\Z_2\times D_4$ & 16 \\[1mm]
	& $A\rtimes\Z_2$ & {\tt SmallGroup(16,3)} & 16 \\[1mm]
	& $A\rtimes\Z_2$ & {\tt SmallGroup(16,13)} & 16 \\[1mm]
	& $A\rtimes\Z_4$ & {\tt SmallGroup(32,6)} & 32 \\[1mm]
	& $A\rtimes (\Z_2\times\Z_2)$ & {\tt SmallGroup(32,49)} & 32 \\[1mm]
	& $A\rtimes (\Z_2\times\Z_2)$ & {\tt SmallGroup(32,27)} & 32 \\[1mm]
	& $A\rtimes D_4$ & $UT(4,2)$ & 64 \\[1mm]
	\midrule
	\multirow{6}{*}{$(\Z_2)^3$} & $A\rtimes\Z_2$ & $\Z_2\times D_4$ & 16 \\[1mm]
	& $A\,.\,\Z_2$ & {\tt SmallGroup(16,3)} & 16 \\[1mm]
	& $A\rtimes\Z_3$ & $\Z_2\times A_4$ & 24 \\[1mm]
	& $A\rtimes\Z_4$ &{\tt SmallGroup(32,6)} & 32 \\[1mm]
	& $A\rtimes(\Z_2\times\Z_2)$ & {\tt SmallGroup(32,49)} & 32 \\[1mm]
	& $A\rtimes(\Z_2\times\Z_2)$ & {\tt SmallGroup(32,27)} & 32 \\[1mm]
	& $A\,.\,(\Z_2\times\Z_2)$ & {\tt SmallGroup(32,34)} & 32 \\[1mm]
	& $A\rtimes S_3$ & $\Z_2\times S_4$ & 48 \\[1mm]
	& $A\rtimes D_4$ & $UT(4,2)$ & 64 \\[1mm]
	& $A\rtimes S_4$ & {\tt SmallGroup(192,955)} & 192 \\[1mm]
	\bottomrule
\end{tabular}
\caption{The summary table of finite non-abelian groups $G$ in the 4HDM scalar sector
	constructed as extension by $\Z_2\times\Z_2$, $\Z_4\times\Z_2$, and $\Z_2\times\Z_2\times\Z_2$. 
	Non-split extensions of the form $G = A\,.\,K$ are
	shown only when they lead to groups $G$ that cannot be obtained by a split extension with the same $A$.}
\label{table-non-abelian}
\end{table}

Table~\ref{table-non-abelian} summarizes the results of this paper.
As can be seen, there are many extensions fitting inside $PSU(4)$. 
Some of the resulting groups have their own labels, but many of these groups are usually
referred to using their {\tt GAP} id. The main text describes how each symmetry group arises, what its generators are, 
and how the potential invariant under this group can be constructed. In certain cases, we gave the potentials
explicitly; in other cases their form can be directly reconstructed using the relations described in the paper.
Should the reader be interested, a phenomenological study can be conducted for any of these symmetry-shaped 4HDMs.
We welcome the community to explore phenomenological and cosmological features of large symmetry groups found here.

Not only did we classify the finite non-abelian symmetry groups which emerge in the 4HDM scalar sector as extensions
by all rephasing groups,
but we also further developed the methods which are not often used in BSM model building.
We believe that these techniques represent a useful contribution to symmetry-based model building
and can be exploited in other settings.

It must be stressed that this paper, together with \cite{Shao:2023oxt}, does not yet complete the classification 
of all realizable finite non-abelian symmetry groups in the 4HDM scalar sector.
These two papers only deal with non-abelian groups which can be constructed as extensions based on rephasing groups $A$.
However Table~\ref{table-abelian} contains three more abelian subgroups of $PSU(4)$ 
which cannot be represented by rephasing transformations alone.
Moreover, we do not yet have the proof that these three additional groups exhaust
all abelian subgroups of $PSU(4)$ whose full pre-images in $SU(4)$ are non-abelian.
This proof and the explicit constructions are delegated to a follow-up paper and
may require development of yet another set of techniques.

Finally, as explained in Section~\ref{subsection-extension}, the group extension technique does not exhaust 
all finite non-abelian groups for the 4HDM. In the 3HDM, this technique provided an exhaustive classification
due to Burnside’s $p^aq^b$-theorem. In the 4HDM, this theorem no longer applies. As a result,
it may happen that other finite groups could be found, which are not of the form $A\,.\,K$, with $K \subseteq \Aut(A)$.
Systematic exploration of such cases must rely on a different strategy.

We hope, however, that these additional groups will be relatively few. 
In this sense, the results of \cite{Shao:2023oxt} and the present work most likely cover the vast majority
of finite non-abelian symmetry groups upon which 4HDM scalar sectors could be constructed. 

\section*{Acknowledgments}

This work was supported by the Guangdong Natural Science Foundation (project No. 2024A1515012789) and 
by the Fundamental Research Funds for the Central Universities, Sun Yat-sen University, China.
M.K. acknowledges a visit to the School of Physics and Astronomy during which a part of this work was done.
We also would like to thank the referees for their valuable comments and suggestions.


\begin{thebibliography}{99}

\bibitem{Lee:1973iz}
T.~D.~Lee,
Phys. Rev. D \textbf{8}, 1226-1239 (1973)
doi:10.1103/PhysRevD.8.1226

\bibitem{Branco:2011iw}
G.~C.~Branco, P.~M.~Ferreira, L.~Lavoura, M.~N.~Rebelo, M.~Sher and J.~P.~Silva,
Phys. Rept. \textbf{516}, 1-102 (2012)
doi:10.1016/j.physrep.2012.02.002
[arXiv:1106.0034 [hep-ph]].

\bibitem{Weinberg:1976hu}
S.~Weinberg,
Phys. Rev. Lett. \textbf{37}, 657 (1976)
doi:10.1103/PhysRevLett.37.657

\bibitem{Ivanov:2017dad}
I.~P.~Ivanov,
Prog. Part. Nucl. Phys. \textbf{95}, 160-208 (2017)
doi:10.1016/j.ppnp.2017.03.001
[arXiv:1702.03776 [hep-ph]].

\bibitem{Bjorken:1977vt}
J.~D.~Bjorken and S.~Weinberg,
Phys. Rev. Lett. \textbf{38}, 622 (1977)
doi:10.1103/PhysRevLett.38.622

\bibitem{Shao:2023oxt}
J.~Shao and I.~P.~Ivanov,
JHEP \textbf{10}, 070 (2023)
doi:10.1007/JHEP10(2023)070
[arXiv:2305.05207 [hep-ph]].

\bibitem{Derman:1979nf}
E.~Derman and H.~S.~Tsao,
Phys. Rev. D \textbf{20}, 1207 (1979)
doi:10.1103/PhysRevD.20.1207

\bibitem{Segre:1978ji}
G.~Segre and H.~A.~Weldon,
Phys. Lett. B \textbf{83}, 351-354 (1979)
doi:10.1016/0370-2693(79)91125-0

\bibitem{GonzalezFelipe:2013xok}
R.~Gonz\'alez Felipe, H.~Ser\^odio and J.~P.~Silva,
Phys. Rev. D \textbf{87}, no.5, 055010 (2013)
doi:10.1103/PhysRevD.87.055010
[arXiv:1302.0861 [hep-ph]].

\bibitem{Ivanov:2014doa}
I.~P.~Ivanov and C.~C.~Nishi,
JHEP \textbf{01}, 021 (2015)
doi:10.1007/JHEP01(2015)021
[arXiv:1410.6139 [hep-ph]].

\bibitem{Leurer:1992wg}
M.~Leurer, Y.~Nir and N.~Seiberg,
Nucl. Phys. B \textbf{398}, 319-342 (1993)
doi:10.1016/0550-3213(93)90112-3
[arXiv:hep-ph/9212278 [hep-ph]].

\bibitem{GonzalezFelipe:2014mcf}
R.~Gonz\'alez Felipe, I.~P.~Ivanov, C.~C.~Nishi, H.~Ser\^odio and J.~P.~Silva,
Eur. Phys. J. C \textbf{74}, no.7, 2953 (2014)
doi:10.1140/epjc/s10052-014-2953-9
[arXiv:1401.5807 [hep-ph]].

\bibitem{Ivanov:2015mwl}
I.~P.~Ivanov and J.~P.~Silva,
Phys. Rev. D \textbf{93}, no.9, 095014 (2016)
doi:10.1103/PhysRevD.93.095014
[arXiv:1512.09276 [hep-ph]].

\bibitem{Haber:2018iwr}
H.~E.~Haber, O.~M.~Ogreid, P.~Osland and M.~N.~Rebelo,
JHEP \textbf{01}, 042 (2019)
doi:10.1007/JHEP01(2019)042
[arXiv:1808.08629 [hep-ph]].

\bibitem{Branco:1999fs}
G.~C.~Branco, L.~Lavoura and J.~P.~Silva,
Int. Ser. Monogr. Phys. \textbf{103}, 1-536 (1999)

\bibitem{Branco:1983tn}
G.~C.~Branco, J.~M.~Gerard and W.~Grimus,
Phys. Lett. B \textbf{136}, 383-386 (1984)
doi:10.1016/0370-2693(84)92024-0

\bibitem{deMedeirosVarzielas:2011zw}
I.~de Medeiros Varzielas and D.~Emmanuel-Costa,
Phys. Rev. D \textbf{84}, 117901 (2011)
doi:10.1103/PhysRevD.84.117901
[arXiv:1106.5477 [hep-ph]].

\bibitem{deMedeirosVarzielas:2012rxf}
I.~de Medeiros Varzielas,
JHEP \textbf{08}, 055 (2012)
doi:10.1007/JHEP08(2012)055
[arXiv:1205.3780 [hep-ph]].

\bibitem{Ivanov:2013nla}
I.~P.~Ivanov and L.~Lavoura,
Eur. Phys. J. C \textbf{73}, no.4, 2416 (2013)
doi:10.1140/epjc/s10052-013-2416-8
[arXiv:1302.3656 [hep-ph]].

\bibitem{Das:2014fea}
D.~Das and U.~K.~Dey,
Phys. Rev. D \textbf{89}, no.9, 095025 (2014)
[erratum: Phys. Rev. D \textbf{91}, no.3, 039905 (2015)]
doi:10.1103/PhysRevD.89.095025
[arXiv:1404.2491 [hep-ph]].

\bibitem{Darvishi:2021txa}
N.~Darvishi, M.~R.~Masouminia and A.~Pilaftsis,
Phys. Rev. D \textbf{104}, no.11, 115017 (2021)
doi:10.1103/PhysRevD.104.115017
[arXiv:2106.03159 [hep-ph]].

\bibitem{Das:2021oik}
D.~Das, P.~M.~Ferreira, A.~P.~Morais, I.~Padilla-Gay, R.~Pasechnik and J.~P.~Rodrigues,
JHEP \textbf{11}, 079 (2021)
doi:10.1007/JHEP11(2021)079
[arXiv:2106.06425 [hep-ph]].

\bibitem{Ivanov:2011ae}
I.~P.~Ivanov, V.~Keus and E.~Vdovin,
J. Phys. A \textbf{45}, 215201 (2012)
doi:10.1088/1751-8113/45/21/215201
[arXiv:1112.1660 [math-ph]].

\bibitem{Ivanov:2012fp}
I.~P.~Ivanov and E.~Vdovin,
Eur. Phys. J. C \textbf{73}, no.2, 2309 (2013)
doi:10.1140/epjc/s10052-013-2309-x
[arXiv:1210.6553 [hep-ph]].

\bibitem{Ivanov:2012ry}
I.~P.~Ivanov and E.~Vdovin,
Phys. Rev. D \textbf{86}, 095030 (2012)
doi:10.1103/PhysRevD.86.095030
[arXiv:1206.7108 [hep-ph]].

\bibitem{Darvishi:2019dbh}
N.~Darvishi and A.~Pilaftsis,
Phys. Rev. D \textbf{101}, no.9, 095008 (2020)
doi:10.1103/PhysRevD.101.095008
[arXiv:1912.00887 [hep-ph]].

\bibitem{Branco:2010tx}
G.~C.~Branco, D.~Emmanuel-Costa and C.~Simoes,
Phys. Lett. B \textbf{690}, 62-67 (2010)
doi:10.1016/j.physletb.2010.05.009
[arXiv:1001.5065 [hep-ph]].

\bibitem{Ivanov:2013bka}
I.~P.~Ivanov and C.~C.~Nishi,
JHEP \textbf{11}, 069 (2013)
doi:10.1007/JHEP11(2013)069
[arXiv:1309.3682 [hep-ph]].

\bibitem{Ferreira:2010hy}
P.~M.~Ferreira, M.~Maniatis, O.~Nachtmann and J.~P.~Silva,
JHEP \textbf{08}, 125 (2010)
doi:10.1007/JHEP08(2010)125
[arXiv:1004.3207 [hep-ph]].

\bibitem{Ferreira:2010yh}
P.~M.~Ferreira, H.~E.~Haber, M.~Maniatis, O.~Nachtmann and J.~P.~Silva,
Int. J. Mod. Phys. A \textbf{26}, 769-808 (2011)
doi:10.1142/S0217751X11051494
[arXiv:1010.0935 [hep-ph]].

\bibitem{Ivanov:2018qni}
I.~P.~Ivanov and M.~Laletin,
Phys. Rev. D \textbf{98}, no.1, 015021 (2018)
doi:10.1103/PhysRevD.98.015021
[arXiv:1804.03083 [hep-ph]].

\bibitem{deMedeirosVarzielas:2021zqs}
I.~de Medeiros Varzielas, I.~P.~Ivanov and M.~Levy,
Eur. Phys. J. C \textbf{81}, no.10, 918 (2021)
doi:10.1140/epjc/s10052-021-09681-w
[arXiv:2107.08227 [hep-ph]].

\bibitem{Plantey:2024gju}
R.~Plantey and M.~A.~Solberg,
[arXiv:2407.05085 [hep-ph]].

\bibitem{Rotman}
J.~J.~Rotman,
An Introduction to the Theory of Groups,
Graduate Texts in Mathematics \textbf{148}, 4th Edition, Springer-Verlag, New York, 1995.

\bibitem{GAP}
The GAP Group.
\textit{GAP---Groups, Algorithms, Programming---A System
	for Computational Discrete Algebra. Version 4.11.1; 2022}.
Available at 
{\tt https://www.gap-system.org}.

\bibitem{TheCode}
Jiazhen Shao, The 4HDM Toolbox,
available at {\tt  https://github.com/JiazhenShao/4HDM-Toolbox.git}


\end{thebibliography}
\end{document}